\documentclass[lettersize,journal]{IEEEtran}
\usepackage{amsmath,amsfonts,amssymb}
\usepackage{algorithmic}
\usepackage{algorithm}
\usepackage{array}
\usepackage[caption=false,font=normalsize,labelfont=sf,textfont=sf]{subfig}
\usepackage{textcomp}
\usepackage{stfloats}
\usepackage{url}
\usepackage{verbatim}
\usepackage{graphicx}
\usepackage{cite}
\usepackage{booktabs}
\usepackage{multirow}
\usepackage{bm}
\usepackage{mathrsfs}
\usepackage{mathtools}
\usepackage[dvipsnames,table,xcdraw]{xcolor}
\usepackage{colortbl}
\usepackage{makecell}
\usepackage{pifont}
\usepackage{hyperref}
\usepackage{amsthm}

\newtheorem{theorem}{Theorem}
\newtheorem{lemma}[theorem]{Lemma}
\newtheorem{proposition}[theorem]{Proposition}

\theoremstyle{remark}
\newtheorem{remark}{Remark}

\colorlet{phaseI}{rgb:red!2,65;green!30,60;blue!20,125}
\colorlet{phaseII}{rgb:red!2,65;green!30,90;blue!20,125}
\colorlet{phaseIII}{rgb:red!60,100;green!20,90;blue!30,125}

\newcommand{\method}{\textbf{\texttt{PRISM}}}
\newcommand{\best}[1]{\textcolor{red}{\textbf{#1}}}
\newcommand{\second}[1]{\textcolor{blue}{\textbf{#1}}}

\newcommand{\venue}[1]{{\scriptsize\color{gray}[#1]}}
\newcommand{\std}[1]{$_{\pm\text{\scriptsize #1}}$}
\definecolor{deepred}{rgb}{0.6,0,0}
\definecolor{prismblue}{rgb}{0.20,0.40,0.80}

\begin{document}

\title{\method{}: Exposing and Resolving Spurious Isolation in Federated Multimodal Continual Learning}

\author{Beining~Wu,~\IEEEmembership{Member,~IEEE},
        Zihao~Ding,
        and~Jun~Huang,~\IEEEmembership{Senior Member,~IEEE}
\thanks{Beining Wu, Zihao Ding, and Jun Huang are with the Department of Electrical Engineering and Computer Science, South Dakota State University, Brookings, SD 57007, USA. E-mails: \{Wu.Beining, Zihao.Ding\}@jacks.sdstate.edu; Jun.Huang@sdstate.edu.}}

\maketitle

\begin{abstract}
While current federated multimodal continual learning over mixture-of-experts low-rank adaptation (MoE-LoRA) is built on the unverified assumption that routing isolates task-specific knowledge into disjoint experts, we argue that routing operates per-sample, while forgetting accumulates across the task sequence, and gradient conflict persists within each expert even when routing is maximally polarized. Moreover, activation-subspace protection can also fail because, under parameter-efficient fine-tuning, it entangles tasks due to a dimension-counting bound, and federated averaging (FedAvg) disrupts client-side orthogonality. To address this, we propose \method{} (\underline{\textbf{P}}er-expert \underline{\textbf{R}}outing-projection \underline{\textbf{I}}nterference-informed \underline{\textbf{S}}ubspace \underline{\textbf{M}}ethod), which maintains a per-expert gradient subspace basis whose orthogonality is preserved under FedAvg and reinterprets MoE routing as a capacity allocator. Our results show that, on LLaVA-1.5-7B, LLaVA-1.5-13B, and Qwen2.5-VL-7B across CoIN-6 and CoIN-Long-10, \method{} outperforms \emph{sixteen} the state of the art baselines in average accuracy. Compared to the best federated multimodal baseline, the performance margin increases from $+3.23$ pp on CoIN-6 to $+6.06$ pp on CoIN-Long-10.
\end{abstract}

\begin{IEEEkeywords}
Federated learning, continual learning, multimodal large language models, mixture of experts, low-rank adaptation, gradient subspace projection.
\end{IEEEkeywords}


\section{Introduction}
\label{sec:intro}

\IEEEPARstart{M}{ultimodal} large language models are now widely used in federated continual learning, where clients in different locations adapt a shared backbone to their own vision-language tasks without exchanging raw data~\cite{Wu2026TNSE, Wu2026COMST,Wu2025ToN,Ding2025IPCCC,Huang2025TMC,Dong2025TCCN}. In this context, mixture-of-experts low-rank adaptation (MoE-LoRA)~\cite{Chen2024NIPS, Wang2025ICCV, Wei2025NIPS, Meng2026ICLR, Guo2026ICLR} is the main architecture. Here, a router assigns each input to a few experts, which allows the model to increase capacity for each task without retraining the backbone. The main idea is that each task uses different experts, so task-specific knowledge should remain separated in different parameter subsets, even though there is no explicit mechanism to enforce this separation.

Although this idea is widely used, previous work has not shown whether routing that separates inputs in the input space also keeps task knowledge separated in the parameter space. We find that it does not. Even if routing is fixed and made as polarized as possible, gradient conflict still occurs within each expert. This happens because routing works on individual samples, but forgetting builds up over the sequence of tasks. As a result, protection must be applied directly in parameter space. The activation subspace, which methods like GPM~\cite{Saha2021ICLR} and Adam-NSCL~\cite{Wang2021CVPR} protect during full fine-tuning, is a natural choice. However, under parameter-efficient fine-tuning (PEFT), the backbone is frozen, so all tasks share the same representational manifold. This causes the activation subspaces for different tasks to become entangled, so they are no longer task-specific. Even if a task-discriminative target is used, federated aggregation introduces another problem. Federated averaging (FedAvg)~\cite{McMahan2017AISTATS} combines client parameters by taking a weighted mean, but does not preserve orthogonality. Therefore, a protection basis that is orthogonal on each client can lose this property after averaging on the server~\cite{Wu2025WASA,DingICNC2025}.

These issues relate to three main research areas. Multimodal and federated continual learning focuses on reducing forgetting at the task level and handling differences between clients~\cite{Meng2026ICLR, Zhang2025ACL, Ge2025ICML, Guo2025ACL, Yoon2021ICML, Guo2026ICLR}. MoE-LoRA expert routing is used for input-level specialization~\cite{Chen2024NIPS, Dou2024ACL, Wang2025ICCV, Wei2025NIPS, Hou2026ARXIV, Chen2026ARXIV}. Subspace protection methods use projection to reduce forgetting~\cite{Saha2021ICLR, Wang2021CVPR, Wang2023EMNLP, Luo2026ICLR, Qiu2026ICLR,Ding2026ICDCS,Wu2026ARXIV,Fang2025ARXIV,Pudasaini2026HPSR,Wu2023MPE}. However, none of these approaches checks whether routing actually separates tasks in parameter space, considers the entanglement of activation subspaces under PEFT, or ensures that per-expert orthogonality is preserved after federated aggregation. No current method solves all three problems in FMCL.

To address all three problems, we propose \method{}: \underline{\textbf{P}}er-expert \underline{\textbf{R}}outing-projection \underline{\textbf{I}}nterference-informed \underline{\textbf{S}}ubspace \underline{\textbf{M}}ethod. \method{} uses two main mechanisms. First, it keeps a protection basis for each expert in the gradient subspace and aggregates these across clients using a rule that preserves orthogonality under FedAvg. Second, it changes the role of MoE routing. Since parameter-space protection is now in place, routing no longer needs to isolate tasks and instead acts as a way to allocate capacity. An interference-informed scheduler assigns the protection budget to each layer based on the measured conflict, not by tuning a hyperparameter.

In summary, our main contributions are as follows.

\begin{itemize}
    \item We identify Spurious Isolation in MoE-LoRA continual learning (routing polarizes inputs across experts but gradient conflict persists inside each expert) and formalize two causes: structural conflict irreducibility at the task-sequence timescale, and an activation-subspace entanglement bound under PEFT.

    \item We propose \method{}, which maintains per-expert gradient-subspace orthogonality under FedAvg by construction (via Per-Expert Federated Orthogonal Subspace Union, PE-FOSU) and recasts MoE routing as a capacity allocator through routing-projection symbiosis.

    \item We evaluate \method{} on three multimodal backbones (LLaVA-1.5-7B and LLaVA-1.5-13B in the main tables; Qwen2.5-VL-7B in Supp~B) across CoIN-6 and CoIN-Long-10 against \emph{sixteen} baselines; \method{} improves average accuracy and backward transfer over the strongest federated MoE-LoRA and subspace-protection baselines.
\end{itemize}

The rest of this paper is organized as follows. Section~\ref{sec:related} reviews related work on federated multimodal continual learning, MoE-LoRA for continual instruction tuning, and subspace protection. Section~\ref{sec:problem} covers the preliminaries and diagnostic analysis that lead to the five design constraints. Section~\ref{sec:method} describes \method{}. Section~\ref{sec:experiments} presents experiments on three multimodal backbones. Section~\ref{sec:conclusion} gives the conclusion.

\section{Related Work}
\label{sec:related}

\subsection{Federated Multimodal Continual Learning}
\label{subsec:rw_fmcl}

Federated multimodal continual learning (FMCL) couples two research directions that prior work has treated separately. The first, multimodal continual instruction tuning, has progressed through a dedicated benchmark~\cite{Chen2024NIPS}, expansion-based continual tuning~\cite{He2026TIP}, low-rank adaptation (LoRA) rank compression~\cite{Meng2026ICLR}, adapter branching~\cite{Zhang2025ACL}, curriculum expert allocation~\cite{Ge2025ICML}, and hierarchical layer decoupling~\cite{Guo2025ACL}. The second, federated continual learning, addresses client-level non-IID forgetting through sparse parameter decomposition~\cite{Yoon2021ICML}, dual-expert routing~\cite{Guo2026ICLR}, lifecycle-aware forgetting defense~\cite{Wu2026ICDCS}, and a recent formalization of federated continual instruction tuning~\cite{Guo2025ICCV}. Both lines converge on parameter-efficient fine-tuning (PEFT); mixture-of-experts low-rank adaptation (MoE-LoRA) is the prevailing design. Representative methods on each side~\cite{Meng2026ICLR, He2026TIP, Guo2026ICLR, Guo2025ICCV,Pan2023SCIS,Fang2025TON,Fang2025JSAC,Wu2025MNET,Xing2026ACR} address a single axis of the problem; none provides the per-expert protection structure required under federated averaging (FedAvg), and the parameter-space gradient conflict in their composition remains unresolved.

\subsection{Mixture-of-Experts for Continual Instruction Tuning}
\label{subsec:rw_moe}

MoE-LoRA for continual instruction tuning extends PEFT with expert specialization. Task-expert routing has been instantiated through plug-in MoE experts~\cite{Chen2024NIPS, Dou2024ACL}, dual-path specialization against knowledge forgetting~\cite{Wang2025ICCV}, and intra-/inter-modal expert separation~\cite{Wei2025NIPS}. A separate line of recent diagnostics identifies router-expert co-drift~\cite{Hou2026ARXIV} and multi-head routing bottlenecks~\cite{Chen2026ARXIV} as failure modes and proposes routing-space remedies. A common premise underlies all of these designs: with sufficient routing calibration or architectural refinement, task-expert specialization yields knowledge isolation in parameter space. Routing's input-space polarization, however, does not transfer to parameter-space isolation: gradient conflict persists inside each expert even when routing is frozen and maximally polarized.

\subsection{Subspace Protection for Continual Learning}
\label{subsec:rw_subspace}

Subspace protection in continual learning projects updates away from directions that previous tasks have consumed. Methods are organized by which subspace they target: the activation null space~\cite{Saha2021ICLR, Wang2021CVPR}, orthogonal complements of accumulated gradients~\cite{Farajtabar2020AISTATS}, low-rank pretrained weight subspaces~\cite{Xiong2026AAAI}, and gradient-space directions via LoRA-specific projection~\cite{Wang2023EMNLP, Luo2026ICLR, Qiu2026ICLR,Wu2026ARXIV1,Wu2025RACS}. A complementary line establishes a geometric bound~\cite{Steele2026ARXIV} between forgetting and task-gradient principal angles. Yet all of the above are designed for single-adapter, centralized settings. Their design leaves two gaps when the target is federated MoE-LoRA: a single global basis cannot differentiate protection across experts with unequal utilization, so heavily loaded experts go underprotected; and FedAvg does not preserve the orthogonality that client-side projection establishes, so the protection structure degrades after every communication round.

%

\section{Preliminaries and Problem Analysis}
\label{sec:problem}


\subsection{Preliminaries}
\label{subsec:prelim}

\textit{1) Federated Multimodal Continual Learning:}
We study federated multimodal continual learning (FMCL), in which $C$ clients jointly train a shared model on a task sequence $T_1, \ldots, T_N$. Each task $T_t$ has a multimodal instruction-tuning dataset $\mathcal{D}_t = \{(v_i^t, q_i^t, a_i^t)\}_{i=1}^{n_t}$ of visual inputs, text queries, and target answers, partitioned across clients under a Dirichlet$(\beta)$ split~\cite{Hsu2019ARXIV}, where smaller $\beta$ yields more heterogeneous partitions $\mathcal{D}_t^c$. At task $T_t$ the federated objective is the partition-weighted loss $\mathcal{L}_t(\theta) = \sum_{c=1}^{C} (|\mathcal{D}_t^c|/|\mathcal{D}_t|)\, \mathbb{E}_{(v,q,a) \sim \mathcal{D}_t^c}[\ell(\theta; v, q, a)]$, aggregated by federated averaging (FedAvg)~\cite{McMahan2017AISTATS} before $T_{t+1}$ begins. Each new task must be learned with minimal forgetting on $T_1, \ldots, T_{t-1}$, and no raw sample leaves its client.

\textit{2) MoE-LoRA Architecture:}
The shared model adopts a mixture-of-experts low-rank adaptation (MoE-LoRA) architecture~\cite{Dou2024ACL} on top of a frozen multimodal backbone $W$. Each of $E$ experts is a pair of low-rank matrices~\cite{Hu2022ICLR} $A_e \in \mathbb{R}^{r \times d}$ and $B_e \in \mathbb{R}^{d \times r}$ with $r \ll d$. A learned router $W_r$ produces input-dependent weights $\pi_e(x) = \exp((W_r x)_e) / \sum_{e'=1}^{E} \exp((W_r x)_{e'})$, from which the top-$K$ experts form the active set $\mathcal{S}_K(x)$. The forward pass is $y = Wx + \sum_{e \in \mathcal{S}_K(x)} \pi_e(x)\, B_e A_e\, x$, with trainable parameters $\theta = \{A_e, B_e\}_{e=1}^{E} \cup \{W_r\}$. MoE-LoRA has become the predominant architecture for federated multimodal continual learning~\cite{Chen2024NIPS, Wang2025ICCV, Wei2025NIPS, Meng2026ICLR, Guo2026ICLR}, built on the premise that routing assigns different tasks to different experts so that task-specific knowledge stays in disjoint parameter subsets.


\subsection{The Spurious Isolation Phenomenon}
\label{subsec:phenomenon}

We instrument MoE-LoRA with $E=4$, LoRA rank $r=8$, and top-$1$ routing on three multimodal backbones (Qwen2.5-VL-7B, LLaVA-1.5-7B, LLaVA-1.5-13B) trained on CoIN-6~\cite{Chen2024NIPS}.

\begin{figure}[t]
    \centering
    \subfloat[Routing weight.]{%
        \includegraphics[width=0.48\linewidth]{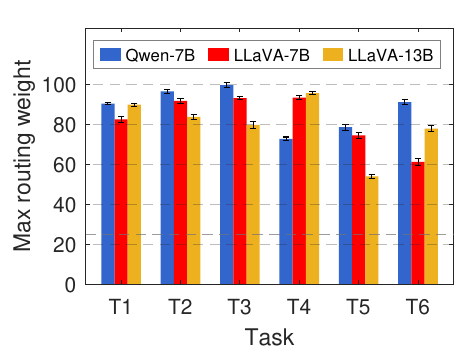}%
        \label{fig:phenomenon-a}}%
    \hfill
    \subfloat[Gradient cosine.]{%
        \includegraphics[width=0.48\linewidth]{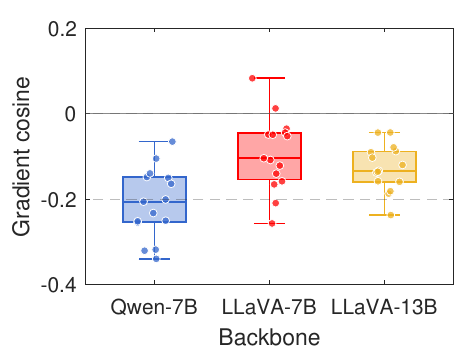}%
        \label{fig:phenomenon-b}}
    \caption{Spurious Isolation evidence on three multimodal backbones.}
    \label{fig:phenomenon}
\end{figure}

\noindent\textbf{Observation 1: Polarized routing coexists with persistent gradient conflict.}
Fig.~\ref{fig:phenomenon}(a) shows each task's dominant expert absorbs $54\%$--$99.5\%$ of its traffic, far above the uniform share $1/E = 25\%$. Yet across all three backbones, $43$ of $45$ cross-task gradient cosines within each pair's most active expert ($\binom{6}{2}=15$ pairs per backbone) fall at or below zero, reaching $-0.34$ on Qwen2.5-VL-7B (Fig.~\ref{fig:phenomenon}b). Polarized routing does not imply parameter-space isolation.

\noindent\textbf{Observation 2: Freezing the router does not eliminate forgetting.}
To rule out routing instability as the cause of Observation 1's conflict, we re-run Qwen2.5-VL-7B with $W_r$ frozen after $T_1$ (routing flip rate $= 0$); backward transfer after $T_6$ nevertheless remains at $-6.8\%$, indistinguishable from the unrestricted baseline (BWT $= -7.1\%$). Routing has no visibility into the per-expert gradient conflict that accumulates in parameter space. During the same diagnostic pass, we record a per-layer interference landscape $\gamma_l$, defined as the rectified cross-task gradient cosine at layer $l$ averaged over task pairs (Tab.~S4 in Supp~B); Section~\ref{sec:method} reads $\gamma_l$ directly without tuning.

Per-expert utilization is heterogeneous across tasks, so a uniform global basis would over-protect lightly loaded experts and under-protect dominant ones; protection must therefore operate at per-expert granularity (design constraint \textbf{C3}). A sample the router effectively ignores contributes negligibly to an expert's parameter update and should influence its protection basis proportionally; the protection covariance must therefore weight each sample by its routing confidence (\textbf{C4}). We label design constraints \textbf{C1}--\textbf{C5} as they arise in the analysis and consolidate them at the end of Section~\ref{subsec:entanglement}.


\subsection{Timescale Mismatch}
\label{subsec:timescale}

Routing operates per sample while forgetting accumulates across the full sequence $T_1, \ldots, T_N$. Let $g_e^{t} = \sum_{i \in \mathcal{D}_t} \pi_e(x_i)\, \nabla_{\theta_e} \ell(\theta; x_i)$ denote the routing-weighted cumulative gradient on expert $e$ across task $T_t$. Proposition~\ref{prop:conflict} shows the resulting structural conflict is irreducible.

\begin{proposition}[Structural Conflict Irreducibility]
\label{prop:conflict}
Let $\mathcal{G}_e^{t} = \{\nabla_{\theta_e} \ell(\theta; x_i) : x_i \in \mathcal{D}_t\}$ be the set of per-sample gradients on expert $e$ for task $T_t$, with cluster mean $\mu_t$. If $\langle \mu_{t_1}, \mu_{t_2} \rangle < 0$ for two tasks $T_{t_1}, T_{t_2}$ and inter-task gradient variance dominates intra-task variance (formal margin condition in Supp.~A, Eq.~(S2)), then $\cos(g_e^{t_1}, g_e^{t_2}) < 0$ for any non-negative routing $\pi$.
\end{proposition}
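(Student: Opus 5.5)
We decompose each per-sample gradient around its task cluster mean and treat the routing-weighted sum as a perturbed, rescaled copy of that mean. For $x_i \in \mathcal{D}_t$ write $\nabla_{\theta_e}\ell(\theta;x_i) = \mu_t + \varepsilon_i$ with $\sum_{i\in\mathcal{D}_t}\varepsilon_i = 0$. Setting $w_t = \sum_{i\in\mathcal{D}_t}\pi_e(x_i) \ge 0$, we get
\[
g_e^{t} = w_t\,\mu_t + \delta_t, \qquad \delta_t = \sum_{i\in\mathcal{D}_t}\pi_e(x_i)\,\varepsilon_i .
\]
Since cosine is scale invariant, we may normalize $\pi$ so that $w_t>0$; if $w_t = 0$ then $g_e^t=0$, the cosine is undefined, and that case is excluded. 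We then bound the weighted deviation. Because $0\le \pi_e(x_i)\le 1$, the triangle inequality gives $\|\delta_t\| \le \sum_i \pi_e(x_i)\|\varepsilon_i\| \le w_t\,\sigma_t^{\max}$, where $\sigma_t^{\max} = \max_i\|\varepsilon_i\|$ is the intra-task spread. The key point is that this bound is proportional to $w_t$ and holds for every non-negative routing. Hence $\hat g_t := g_e^t/w_t = \mu_t + \hat\delta_t$ with $\|\hat\delta_t\|\le \sigma_t^{\max}$, uniformly in $\pi$.

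Next we expand the inner product:
\[
\langle \hat g_{t_1}, \hat g_{t_2}\rangle = \langle\mu_{t_1},\mu_{t_2}\rangle + \langle\mu_{t_1},\hat\delta_{t_2}\rangle + \langle\hat\delta_{t_1},\mu_{t_2}\rangle + \langle\hat\delta_{t_1},\hat\delta_{t_2}\rangle .
\]
By Cauchy--Schwarz, this is at most
\[
\langle\mu_{t_1},\mu_{t_2}\rangle + \|\mu_{t_1}\|\sigma_{t_2}^{\max} + \sigma_{t_1}^{\max}\|\mu_{t_2}\| + \sigma_{t_1}^{\max}\sigma_{t_2}^{\max}.
\]
We take this bound as the formal margin condition (S2): the quantity
\[
-\langle\mu_{t_1},\mu_{t_2}\rangle - \|\mu_{t_1}\|\sigma_{t_2}^{\max} - \sigma_{t_1}^{\max}\|\mu_{t_2}\| - \sigma_{t_1}^{\max}\sigma_{t_2}^{\max}
\]
must be strictly positive. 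This is the precise sense in which inter-task structure dominates intra-task variance. Under (S2) the inner product is negative. Because the $w_t$ are positive, this gives $\langle g_e^{t_1},g_e^{t_2}\rangle<0$, and therefore the cosine is negative, provided both vectors are nonzero. Nonzero-ness follows from the same margin: $\|\hat g_t\|\ge\|\mu_t\|-\sigma_t^{\max}>0$, which is implied by (S2) since the negative inner product forces $\|\mu_t\|>\sigma_t^{\max}$.

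The main obstacle is choosing a form of (S2) that is both provable for \emph{every} non-negative $\pi$ and plausibly natural. A variance-based condition, for example on $\mathbb{E}\|\varepsilon_i\|^2$, would be more natural but fails in the worst case. An adversarial router can concentrate weight on the single outlier sample, so only a max-deviation (or weighted-max) bound survives uniformly over $\pi$. I would therefore state (S2) with $\sigma_t^{\max}$. I would then remark that under a bounded-routing-concentration assumption, $\max_i\pi_e(x_i)/w_t \le \kappa$, a second-moment version follows via Cauchy--Schwarz with a $\sqrt{\kappa}$ factor. The conclusion then shows that no reweighting of samples, including freezing or polarizing the router, can flip the sign of the conflict. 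This matches Observation 2.
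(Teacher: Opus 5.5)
Your proof is correct and follows essentially the same route as the paper. The paper likewise writes $g_e^{t} = w_t\mu_t + \eta_t$ with $\|\eta_t\|\le w_t\sigma$ under a uniform per-sample bound $\|\xi_i\|\le\sigma$ (its Lemma~S1), expands $\langle g_e^{t_1}, g_e^{t_2}\rangle$ into four terms, and bounds the cross terms by Cauchy--Schwarz against the margin (S2): $\langle\mu_1,\mu_2\rangle\le-\delta$ with $\sigma^2+\sigma(\|\mu_1\|+\|\mu_2\|)<\delta$ and $\sigma=\max(\sigma_1,\sigma_2)$. Your per-task form of the margin is implied by (S2), so your hypothesis is slightly weaker. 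You also make explicit why both cumulative gradients are nonzero, which the paper only asserts.
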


Because the routing weights are non-negative, each cumulative gradient is confined to the conic hull of its task's per-sample gradients,
\begin{equation}
    g_e^{t} \,\in\, \mathcal{A}_e^{t} \,:=\, \operatorname{cone}\!\big\{\, \nabla_{\theta_e} \ell(\theta; x_i) \,:\, x_i \in \mathcal{D}_t \,\big\}.
    \label{eq:conic}
\end{equation}
When $\langle \mu_{t_1}, \mu_{t_2} \rangle < 0$ and inter-task variance dominates intra-task variance, $\mathcal{A}_e^{t_1}$ and $\mathcal{A}_e^{t_2}$ lie in opposite half-spaces of $\mathbb{R}^d$, and any non-negative combination stays inside its own cone; hence $\cos(g_e^{t_1}, g_e^{t_2})$ remains negative for any choice of $\pi$. Full proof in Supp.~A.

Proposition~\ref{prop:conflict} forces forgetting to be addressed at the task-sequence timescale in parameter space rather than through routing (\textbf{C1}). FedAvg adds a further constraint: cross-client averaging is invisible to routing, and server-side re-orthogonalization would restore the basis only by discarding aggregated gradient mass, so the protection must hold by construction (\textbf{C5}).


\subsection{PEFT Entanglement}
\label{subsec:entanglement}

Constraint C1 leaves the question of which subspace to protect. The continual-learning literature offers three candidates: the activation subspace~\cite{Saha2021ICLR}, the pre-trained weight subspace~\cite{Xiong2026AAAI}, and the gradient subspace~\cite{Luo2026ICLR}. Under parameter-efficient fine-tuning (PEFT) the first two are backbone properties every task inherits rather than task-discriminative signals, and Proposition~\ref{prop:entanglement} makes the resulting entanglement explicit.

\begin{proposition}[Budget-Constrained Activation Entanglement Bound]
\label{prop:entanglement}
Let $r_o$ be the effective rank of the frozen backbone's activation covariance and $k$ the per-expert protection budget retained after singular value decomposition (SVD). For any two tasks $T_{t_1}, T_{t_2}$, the activation-subspace overlap satisfies
\begin{equation}
\Omega^a(t_1, t_2) \;\geq\; \max\!\left(0,\; 2 - \tfrac{r_o}{k}\right).
\label{eq:entanglement}
\end{equation}
\end{proposition}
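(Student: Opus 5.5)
The bound is a dimension-counting argument, so the proof first fixes what $\Omega^a$ means. For each task $T_t$, let $U_t \in \mathbb{R}^{d \times k}$ be the orthonormal basis of the top-$k$ left singular vectors of that task's (per-expert) activation covariance; this is the protected subspace that GPM-style methods retain under budget $k$. I take the overlap to be the normalized Frobenius alignment
\begin{equation*}
\Omega^a(t_1,t_2) = \tfrac{1}{k}\,\|U_{t_1}^\top U_{t_2}\|_F^2 = \tfrac{1}{k}\sum_{j=1}^{k}\cos^2\phi_j,
\end{equation*}
where $\phi_j$ are the principal angles between the two subspaces. If the supplement defines it instead as $\dim(\mathcal{U}_{t_1}\cap\mathcal{U}_{t_2})/k$, the same argument applies verbatim, and it is in fact sharper in that form.

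\textbf{Step 1: both bases live in one shared subspace.} Under PEFT the backbone $W$ is frozen, so every task's activations are produced by the same feature map. Let $\mathcal{R}$ be the $r_o$-dimensional span of the dominant eigenvectors of the backbone's activation covariance; this is the effective-rank assumption. I would argue that for any task, the top-$k$ SVD directions (with $k \le r_o$) lie in $\mathcal{R}$. Task data only reweights the backbone's output distribution, so each task covariance $\Sigma_t$ is dominated by $\Sigma_o$ and its range satisfies $\operatorname{range}(\Sigma_t)\subseteq\mathcal{R}$ up to a negligible tail. Hence $\mathcal{U}_{t_1},\mathcal{U}_{t_2}\subseteq\mathcal{R}$. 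This modelling step, which turns "effective rank" into exact containment (or containment up to an $\epsilon$ that I carry as a perturbation term), is the main obstacle. I would first try to state it as the exact-containment assumption and then note that a Davis--Kahan bound gives the $\epsilon$-robust version.

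\textbf{Step 2: count dimensions.} By Grassmann's formula inside $\mathcal{R}$,
\begin{equation*}
\dim(\mathcal{U}_{t_1}\cap\mathcal{U}_{t_2}) \ge k + k - r_o = 2k - r_o.
\end{equation*}
So at least $\max(0, 2k-r_o)$ principal angles are zero, each contributing $\cos^2\phi_j = 1$. Dividing by $k$ gives $\Omega^a \ge \max(0, 2 - r_o/k)$. For the Frobenius form, the remaining cosines are nonnegative, so the bound still holds. An equivalent trace argument gives the same result: with projectors $P_i = U_{t_i}U_{t_i}^\top$ we have $P_1 + P_2 \preceq (1+\|P_1P_2\|)P_{\mathcal{R}}$. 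I would prefer the Grassmann route, since it is cleaner.

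\textbf{Step 3: interpret the result.} Whenever $k > r_o/2$, which is the regime PEFT forces because $r_o$ is small relative to the required budget, any two tasks' protected activation subspaces necessarily share directions. This motivates moving protection to the gradient subspace.
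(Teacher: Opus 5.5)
Your proof is correct and is essentially the paper's own: the paper likewise takes $\Omega^a(t_1,t_2)=\operatorname{tr}(P_{t_1}P_{t_2})/k=\frac{1}{k}\sum_j\cos^2\theta_j$. It simply posits, as you propose to, that each task's top-$k$ eigenspace $V_t$ lies in the $r_o$-dimensional backbone manifold $\mathcal{M}$, discards the nonzero-angle terms to get $\dim(V_{t_1}\cap V_{t_2})/k$, and finishes with Grassmann's formula.

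One small correction to your aside: taking the trace of $P_1+P_2\preceq(1+\|P_1P_2\|)P_{\mathcal{R}}$ only lower-bounds the largest cosine $\|P_1P_2\|$, not $\Omega^a$. The trace route that does yield the bound is $\operatorname{tr}\big((P_{\mathcal{R}}-P_1)(P_{\mathcal{R}}-P_2)\big)\ge 0$, which expands to $\operatorname{tr}(P_1P_2)\ge 2k-r_o$.
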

\begin{proof}[Proof sketch]
Both top-$k$ subspaces lie inside the $r_o$-dimensional manifold; pigeonhole gives $\dim(\cap) \geq 2k - r_o$, equivalent to $\Omega^a \geq 2 - r_o/k$. Full proof in Supp.~A.
\end{proof}

\begin{figure}[t]
    \centering
    \subfloat[Overlap at $k/r_o = 0.75$.]{%
        \includegraphics[width=0.48\linewidth]{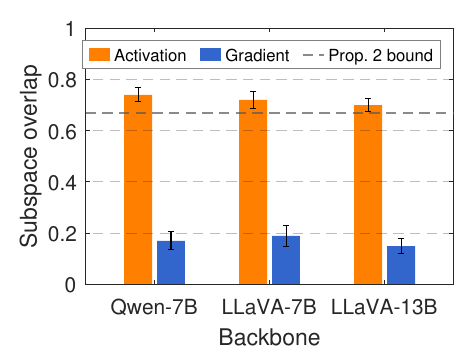}%
        \label{fig:entanglement-a}}%
    \hfill
    \subfloat[Overlap vs. budget.]{%
        \includegraphics[width=0.48\linewidth]{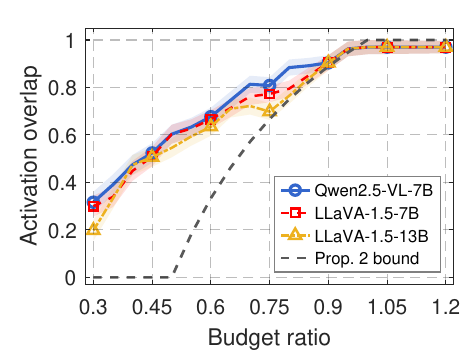}%
        \label{fig:entanglement-b}}
    \caption{Activation versus gradient subspace overlap.}
    \label{fig:entanglement}
\end{figure}

We measure $r_o$ as the number of singular values of the activation covariance above $10^{-3}$ of the leading one. At the operational budget $k/r_o \approx 0.75$, Eq.~\eqref{eq:entanglement} gives $\Omega^a \geq 0.67$. Fig.~\ref{fig:entanglement}(a) measures activation overlap in $[0.70, 0.74]$ against gradient overlap in $[0.15, 0.19]$, a roughly $4\times$ gap; Fig.~\ref{fig:entanglement}(b) shows the bound tightens as the budget ratio grows and closely tracks the measured activation overlap in the practical regime $k/r_o \geq 0.75$, while no such floor binds the gradient subspace.

The reason is structural, visible once the two covariances are written side by side. Let $h_i$ be the activation at sample $i$ and $\delta_i = \partial \ell_i / \partial y_i$ the output-side error signal; then
\begin{equation}
    R \,=\, \sum_{i} h_i\, h_i^{\top}, \qquad G \,=\, \sum_{i} s_i\, h_i\, h_i^{\top}, \quad s_i = \|B^{\top} \delta_i\|_2^{\,2}.
    \label{eq:covariance_contrast}
\end{equation}
The scalar $s_i$ breaks the uniform spectral concentration underlying Eq.~\eqref{eq:entanglement}: $G$'s top eigenvectors concentrate on loss-sensitive samples, $R$'s do not. Activation-subspace protection still preserves prior-task knowledge orthogonally, but high overlap over-allocates capacity to shared directions, collapsing plasticity past the bound (Fed-GPM's AA collapses at $k/r_o > 0.6$ in Fig.~\ref{fig:theory}(b)). The gradient subspace escapes this regime (\textbf{C2}) and enlarges the smallest principal angle $\theta_{\min}$ that drives the forgetting bound $1 - \cos^2 \theta_{\min}$~\cite{Steele2026ARXIV} (Fig.~\ref{fig:theory}(a)).

Combining the timescale analysis with the entanglement bound, any method that resolves Spurious Isolation must satisfy five constraints: parameter-space operation at the task-sequence timescale (\textbf{C1}), a gradient subspace as the protection target (\textbf{C2}), per-expert granularity (\textbf{C3}), routing-weighted accumulation (\textbf{C4}), and FedAvg-compatible orthogonality (\textbf{C5}). Section~\ref{sec:method} presents \method{} as a framework that satisfies all five.

%
%

\section{Methodology}
\label{sec:method}

\begin{figure*}[t]
\centering
\includegraphics[width=0.85\textwidth]{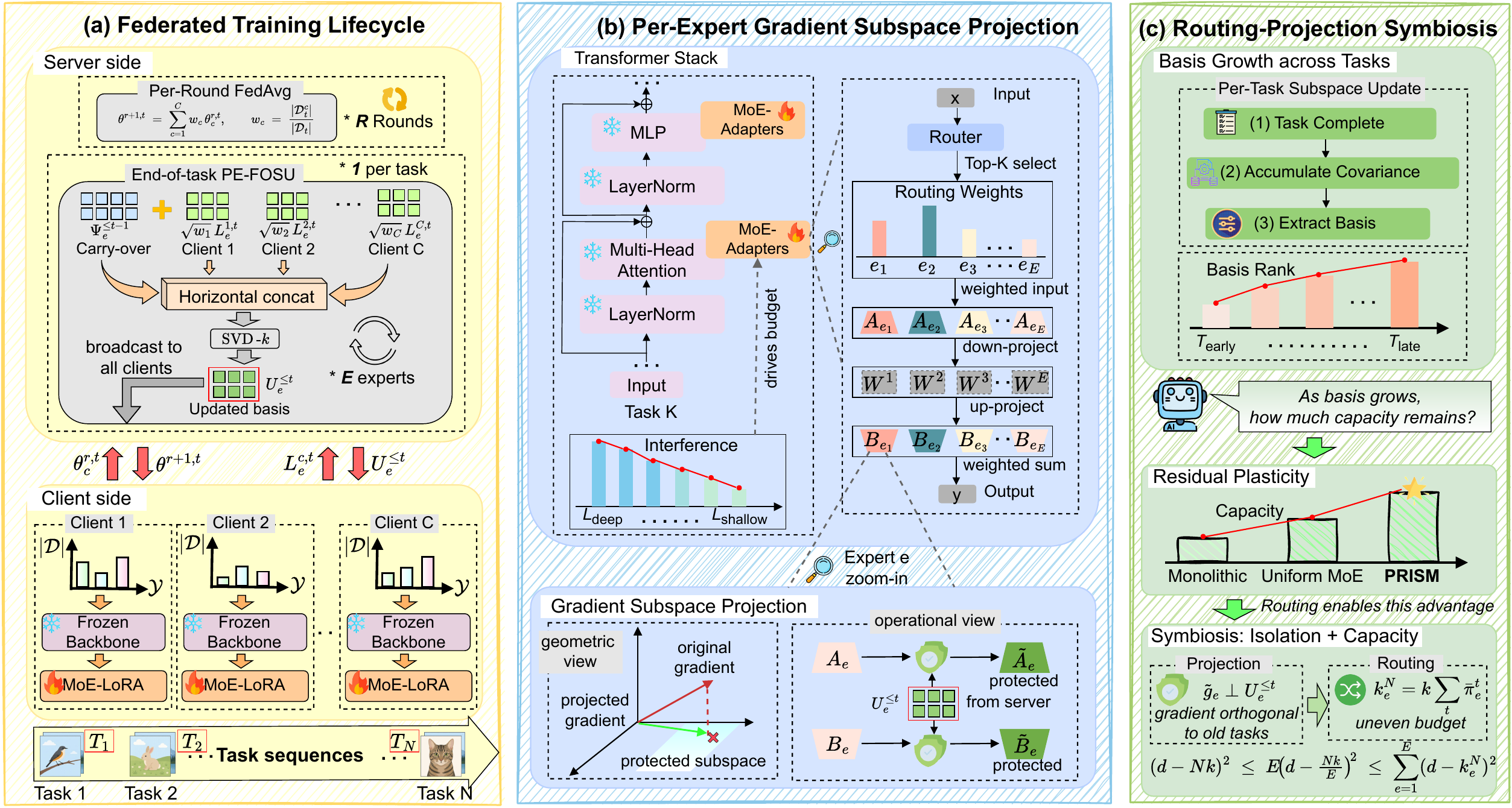}
\caption{Overview of \method{}. (a) Federated lifecycle with per-round FedAvg and end-of-task PE-FOSU. (b) Per-expert gradient subspace projection at deep MoE-LoRA layers, with interference-driven per-layer budget. (c) Routing-projection symbiosis: residual plasticity capacity ordering across architectures.}
\label{fig:framework}
\end{figure*}


\subsection{Overview}
\label{subsec:overview}

Fig.~\ref{fig:framework} traces \method{}'s three coupled mechanisms. Panel~(a) shows the federated training lifecycle: clients train mixture-of-experts low-rank adaptation (MoE-LoRA) on private partitions under per-round federated averaging (FedAvg), and at task end the per-expert federated orthogonal subspace union (PE-FOSU) merges a carry-over operator with weighted client covariance factors through a single thin singular value decomposition (SVD), then broadcasts per-expert bases $U_e^{\leq t}$ to all clients. Panel~(b) zooms into one deep-layer expert: routing weights gate the top-$K$ subset, a Kronecker bilateral projection keeps both LoRA factors orthogonal to $U_e^{\leq t}$, and the per-layer protection rank is read from the interference landscape $\gamma_l$ measured in Section~\ref{subsec:phenomenon}. Panel~(c) closes the design: per-expert protection extends the saturation horizon $E$-fold over single-basis methods on the same total budget, and routing recovers a role as capacity allocator through load matching. The three panels jointly satisfy C1--C5 across Sections~\ref{subsec:pefosu}--\ref{subsec:symbiosis}.

\subsection{Per-Expert Gradient Subspace Isolation with Federated Aggregation}
\label{subsec:pefosu}

To satisfy constraints C1, C2, and C5 jointly, we maintain a per-expert protection basis in the gradient subspace and aggregate it across clients through a rule that preserves orthogonality under FedAvg by construction.

For expert $e$ and client $c$, define the routing-weighted gradient covariance $R_e^{c,t} = \sum_{i \in \mathcal{D}_t^c} \pi_e(x_i)\, g_{i,e}\, g_{i,e}^{\top}$, where $g_{i,e} = \nabla_{\theta_e} \ell(\theta; x_i)$. This definition realizes C4: a sample confidently routed to expert $e$ enters the covariance with large mass, while a sample the router effectively ignored has vanishing influence. Each client factorizes $R_e^{c,t} = L_e^{c,t}(L_e^{c,t})^{\top}$ at the per-layer rank $k = k_l^*$ from the scheduler of Section~\ref{subsec:scheduling} and uploads $L_e^{c,t} \in \mathbb{R}^{d \times k}$.

The server aggregates the client covariances and folds in the carry-over from prior tasks. Writing $\Psi_e^{\leq t-1} = U_e^{\leq t-1}\, \Lambda_e^{\leq t-1}\, (U_e^{\leq t-1})^{\top}$ for the carry-over operator accumulated through $T_{t-1}$, the per-expert federated orthogonal subspace union (PE-FOSU) extracts the top-$k$ eigenpair of the combined covariance:
\begin{equation}
\begin{aligned}
    M_e^{t} \,&=\, \Psi_e^{\leq t-1} \,+\, \sum_{c=1}^{C} w_c\, R_e^{c,t}, \\[4pt]
    (U_e^{\leq t},\, \Lambda_e^{\leq t}) \,&=\, \operatorname*{arg\,max}_{U \,\in\, \mathrm{St}(d,\, k)}\; \operatorname{tr}\!\big(\, U^{\top} M_e^{t}\, U \,\big),
\end{aligned}
\label{eq:pefosu}
\end{equation}
where $w_c = |\mathcal{D}_t^c|/|\mathcal{D}_t|$ matches the FedAvg mixing weights and $\mathrm{St}(d, k) = \{U \in \mathbb{R}^{d \times k} : U^{\top} U = I_k\}$. In implementation, the server performs a thin SVD on the concatenation $[U_e^{\leq t-1} (\Lambda_e^{\leq t-1})^{1/2},\; w_1^{1/2}\, L_e^{1,t},\; \ldots,\; w_C^{1/2}\, L_e^{C,t}] \in \mathbb{R}^{d \times (C+1)k}$, which recovers $(U_e^{\leq t}, \Lambda_e^{\leq t})$ without ever materializing the $d \times d$ matrix $M_e^{t}$; per-expert communication is $O(dk)$ and C1--C3 are met in a single object.

During task $T_{t+1}$, the updated basis governs how clients process incoming gradients. Writing $\Pi_e^{t} = I_d - U_e^{\leq t}(U_e^{\leq t})^{\top}$ for the orthogonal projector onto the complement of $U_e^{\leq t}$, protection on the bilinear LoRA update $\Delta W_e = B_e\, A_e$ is realized by the Kronecker bilateral projector
\begin{equation}
    \mathbf{P}_e^{t} \,=\, \Pi_e^{t}\, \otimes\, \Pi_e^{t}, \qquad \operatorname{vec}\!\big(\widetilde{\Delta W_e}\big) \,=\, \mathbf{P}_e^{t}\, \operatorname{vec}(B_e\, A_e).
    \label{eq:kron_proj}
\end{equation}
The Kronecker product of two orthogonal projectors is itself orthogonal on $\mathbb{R}^{d^2}$, so Eq.~\eqref{eq:kron_proj} is a single parameter-space orthogonal projection rather than the composition of two independent one-sided projections. Un-vectorizing recovers the matrix sandwich $\widetilde{\Delta W_e} = \Pi_e^{t}\, B_e\, A_e\, \Pi_e^{t}$, which at the gradient level decomposes into the pair $g_e^{A} \mapsto g_e^{A}\, \Pi_e^{t}$ and $g_e^{B} \mapsto \Pi_e^{t}\, g_e^{B}$. A one-sided projection leaves a residual $\Delta B_e \cdot A_e\, h$ that is first-order in the update, whereas the bilateral action pushes the residual to second order $\Delta B_e \cdot \Delta A_e\, h$; a formal residual-order argument is given in Supp.~A.

The Kronecker structure fixes the dimensional bookkeeping for Section~\ref{subsec:symbiosis}'s capacity argument.

\begin{lemma}[Bilateral Protection Rank]
\label{lem:bilateral_rank}
The Kronecker bilateral projector $\mathbf{P}_e^{t}$ satisfies
\[
    \operatorname{rank}(\mathbf{P}_e^{t}) = (d - k)^2, \qquad \operatorname{codim}(\mathbf{P}_e^{t}) = 2dk - k^2.
\]
\end{lemma}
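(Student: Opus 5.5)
The plan is to use two standard facts: the rank of a Kronecker product factorizes, and the rank of an orthogonal projector equals its trace. Then I convert rank to codimension by subtracting from the ambient dimension $d^2$.

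First I pin down $\operatorname{rank}(\Pi_e^{t})$. By Eq.~\eqref{eq:pefosu}, $U_e^{\leq t}\in\mathrm{St}(d,k)$, so $(U_e^{\leq t})^{\top}U_e^{\leq t}=I_k$. Hence $U_e^{\leq t}(U_e^{\leq t})^{\top}$ is the orthogonal projector onto the $k$-dimensional column space of $U_e^{\leq t}$. Its complement $\Pi_e^{t}=I_d-U_e^{\leq t}(U_e^{\leq t})^{\top}$ is idempotent and symmetric, with
\[
\operatorname{rank}(\Pi_e^{t})=\operatorname{tr}(\Pi_e^{t})=d-\operatorname{tr}\big((U_e^{\leq t})^{\top}U_e^{\leq t}\big)=d-k .
\]
Here I use that the rank of an idempotent matrix equals its trace, together with the cyclic property of the trace.

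Next I apply the Kronecker identity $\operatorname{rank}(X\otimes Y)=\operatorname{rank}(X)\operatorname{rank}(Y)$. A self-contained route is to compute $(\Pi_e^{t}\otimes\Pi_e^{t})^2=(\Pi_e^{t})^2\otimes(\Pi_e^{t})^2=\Pi_e^{t}\otimes\Pi_e^{t}$ by the mixed-product rule, and to note that $(\Pi_e^{t}\otimes\Pi_e^{t})^{\top}=\Pi_e^{t}\otimes\Pi_e^{t}$. So $\mathbf{P}_e^{t}$ is an orthogonal projector, consistent with the remark after Eq.~\eqref{eq:kron_proj}. Its rank is then its trace, and $\operatorname{tr}(X\otimes Y)=\operatorname{tr}X\cdot\operatorname{tr}Y$ gives
\[
\operatorname{rank}(\mathbf{P}_e^{t})=(d-k)^2 .
\]
Concretely, if $\{u_j\}$ is an orthonormal basis of $\operatorname{range}(\Pi_e^{t})$, the vectors $u_j\otimes u_{j'}$ form an orthonormal basis of the range of $\mathbf{P}_e^{t}$.

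For the codimension, I read it as the dimension of the protected directions inside $\mathbb{R}^{d^2}\cong\mathbb{R}^{d\times d}$, namely $\ker\mathbf{P}_e^{t}$. This gives
\[
\operatorname{codim}(\mathbf{P}_e^{t})=d^2-(d-k)^2=2dk-k^2 .
\]
As an interpretive check, in matrix form the kernel consists of the $\Delta W$ with $\Pi_e^{t}\Delta W\,\Pi_e^{t}=0$, i.e.\ $\Delta W=UX^{\top}+YU^{\top}$ with $U=U_e^{\leq t}$. Counting gives $dk+dk$ parameters, minus the $k^2$ doubly counted block $U Z U^{\top}$, which equals $2dk-k^2$ again.

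The main obstacle is only definitional. The lemma implicitly treats $\mathbf{P}_e^{t}$ as acting on $\mathbb{R}^{d^2}$, so $B_eA_e$ must be square $d\times d$; this matches the paper's $A_e\in\mathbb{R}^{r\times d}$ and $B_e\in\mathbb{R}^{d\times r}$. The basis must also have exactly $k$ orthonormal columns, which requires $k\le d$ and a full-rank $M_e^{t}$ on the selected eigenspace. I would state these assumptions explicitly before running the trace computation.
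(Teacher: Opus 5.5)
Your argument is correct and is essentially the paper's. The paper counts the $(d-k)^2$ unit eigenvalues of $\Pi_e^{t}\otimes\Pi_e^{t}$ using the fact that a Kronecker product's spectrum is the pairwise products of the factor eigenvalues; you reach the same count via $\operatorname{rank}=\operatorname{tr}$ for the orthogonal projector you verify and $\operatorname{tr}(X\otimes Y)=\operatorname{tr}X\,\operatorname{tr}Y$, and both then subtract from $d^2$ (your kernel parametrization $UX^{\top}+YU^{\top}$ is a useful independent check). One small point: the full-rank condition you place on $M_e^{t}$ is unnecessary for the lemma as stated, since $U_e^{\leq t}\in\mathrm{St}(d,k)$ already guarantees exactly $k$ orthonormal columns.
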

\begin{proof}[Proof sketch]
The spectrum of a Kronecker product is the pairwise product of its factor spectra; $\Pi_e^{t}$ has exactly $d - k$ unit eigenvalues, so $\mathbf{P}_e^{t}$ has $(d - k)^2$ unit eigenvalues. The codimension in $\mathbb{R}^{d^2}$ is $d^2 - (d - k)^2 = 2dk - k^2$.
\end{proof}

Lemma~\ref{lem:bilateral_rank} counts directions in the ambient $\mathbb{R}^{d^2}$; the LoRA rank-$r$ constraint refines feasible updates to $\sim 2r(d-k)$ directions per expert, linear in $(d-k)$, which Section~\ref{subsec:symbiosis} uses when comparing per-expert against monolithic capacity.

It remains to verify C5. Client-side projection controls each gradient before aggregation; whether FedAvg preserves that control at the server is settled by the proposition below.

\begin{proposition}[FedAvg Preserves Orthogonality]
\label{prop:fedavg_ortho}
Let $\{\tilde{g}^{c}\}_{c=1}^{C}$ be client gradients that each satisfy $\tilde{g}^{c} \perp U_e^{\leq t}$ on both LoRA factors. For any non-negative weights $\{w_c\}_{c=1}^{C}$ with $\sum_c w_c = 1$, the server-aggregated gradient $\bar{g} = \sum_c w_c\, \tilde{g}^{c}$ also satisfies $\bar{g} \perp U_e^{\leq t}$.
\end{proposition}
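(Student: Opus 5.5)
The plan is to read ``$\tilde{g}^{c} \perp U_e^{\leq t}$ on both LoRA factors'' as a pair of linear conditions and use linearity. For the $A$-factor gradient $g^{A,c} \in \mathbb{R}^{r\times d}$ the condition is $g^{A,c} U_e^{\leq t} = 0$, equivalently $g^{A,c} = g^{A,c}\,\Pi_e^{t}$. For the $B$-factor gradient $g^{B,c} \in \mathbb{R}^{d\times r}$ it is $(U_e^{\leq t})^{\top} g^{B,c} = 0$, equivalently $g^{B,c} = \Pi_e^{t}\, g^{B,c}$. These match the gradient-level decomposition of Eq.~\eqref{eq:kron_proj}. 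The first step is to state these as the definition, so the claim becomes: each factor's feasible set is a linear subspace, namely the kernel of right- or left-multiplication by $U_e^{\leq t}$.

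The second step is the computation itself. Because every client projects against the \emph{same} broadcast basis $U_e^{\leq t}$, which PE-FOSU produces once per task on the server and holds fixed through all rounds of $T_{t+1}$, we get
\[
\bar{g}^{A} U_e^{\leq t} = \sum_c w_c\, g^{A,c} U_e^{\leq t} = 0, \qquad (U_e^{\leq t})^{\top}\bar{g}^{B} = \sum_c w_c\, (U_e^{\leq t})^{\top} g^{B,c} = 0 .
\]
Linearity of matrix multiplication is all this uses. Nonnegativity and normalization of the $w_c$ are not needed: any real weights work, since the set is a subspace and not merely a convex set. I will remark on this so the hypothesis is seen as harmless. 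An equivalent formulation is that $\mathbf{P}_e^{t}$ is a fixed linear operator. It commutes with the weighted sum, so if $\mathbf{P}_e^{t}\tilde{g}^{c}=\tilde{g}^{c}$ for all $c$, then $\mathbf{P}_e^{t}\bar{g}=\bar{g}$.

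The main obstacle is not the algebra. It is pinning down the scope, so the statement is not read as claiming more than it delivers. First, the identical-basis requirement must hold: with per-client bases the sum would not stay in any single complement. This is exactly why the basis is aggregated at task end and broadcast, not recomputed locally each round. Second, FedAvg in practice averages the parameters $A_e, B_e$ after several local steps, not single gradients. I would extend the argument by induction over local steps. Each local update is a linear combination of projected gradients, so the client's accumulated parameter change $\Delta A_e^{c}, \Delta B_e^{c}$ lies in the same subspaces, and so does its weighted average.

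I would also note a limit of this argument. It controls the averaged factor updates individually. It does not directly say the product $\Delta B_e\,\Delta A_e$ of averaged factors is protected, and bilinear cross terms across clients can appear. Those residuals, however, are of the second-order type already handled by the residual-order argument cited for Eq.~\eqref{eq:kron_proj}. So the proposition as stated, about aggregated gradients on both factors, follows cleanly from linearity.
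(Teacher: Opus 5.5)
Your proof is correct and follows the paper's own argument. The paper states the hypothesis as $P\,\tilde g_c^{\,B} = 0$ and $\tilde g_c^{\,A} P = 0$ with $P = U_e^{\leq t}(U_e^{\leq t})^{\top}$, which is equivalent to your conditions on $U_e^{\leq t}$ because its columns are orthonormal, and concludes by linearity of this fixed projector exactly as you do; your remarks that only linearity is used and that the common broadcast basis is what matters are accurate. One caution on your extension to averaged parameters: the induction works for the plain projected-SGD step written in Algorithm~\ref{alg:prism}, but not automatically under AdamW's elementwise preconditioning and weight decay, nor during warmup, where the interpolated projector $\Pi_e^{t}(s)$ does not enforce the hypothesis.
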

\begin{proof}[Proof sketch]
The orthogonal complement of any subspace is closed under finite linear combinations; FedAvg's convex-combination rule falls inside this closure. Full proof in Supp.~A.
\end{proof}

Proposition~\ref{prop:fedavg_ortho} completes C5: orthogonality established at the client survives the federated mixing step, so \method{} achieves FedAvg-compatible protection without server-side correction. Three properties depart from the closest prior design KeepLoRA~\cite{Luo2026ICLR}: per-expert basis (C3), routing-weighted covariance (C4), and FedAvg-exact aggregation (C5).

%

\subsection{Interference-Informed Scheduling}
\label{subsec:scheduling}

The diagnostic setup of Section~\ref{subsec:phenomenon} includes a per-layer interference landscape $\gamma_l \in \mathbb{R}_{\geq 0}$, reported in Tab.~S4 of Supp~B, whose spatial structure is far from uniform; we read protection budgets directly from this measurement rather than tuning them as hyperparameters.

We first allocate protection across layers. Let $L$ be the number of LoRA-inserted layers, $\bar k$ the per-expert protection budget, and $\boldsymbol{\gamma} = (\gamma_1, \ldots, \gamma_L)^{\top}$ the interference vector. The budget is the solution of a water-filling optimization over the simplex,
\begin{equation}
    \{k_l^{*}\}_{l=1}^{L} \,=\, \operatorname*{arg\,max}_{\substack{k_l \,\geq\, 0 \\ \sum_{l} k_l \,=\, \bar k}}\; \sum_{l=1}^{L}\, \gamma_l\, k_l^{\,1/2},
    \label{eq:waterfill}
\end{equation}
whose objective maximizes the total interference energy captured by the allocation. KKT stationarity $\gamma_l / (2\sqrt{k_l^{*}}) = \lambda$ combined with $\sum_l k_l^{*} = \bar k$ yields $k_l^{*} = \bar k\, \gamma_l^{\,2} / \|\boldsymbol{\gamma}\|_2^{\,2}$, so high-conflict layers receive a wider protected subspace and low-conflict layers retain more capacity for plasticity.

We next anneal protection in time. Applying a tight projection from step zero of each new task can starve exploration, so during the warmup window the effective projector interpolates linearly from the identity to $\Pi_e^{t}$: $\Pi_e^{t}(s) = (1 - \alpha(s))\, I_d + \alpha(s)\, \Pi_e^{t}$ with $\alpha(s) = \min(s / s_0,\, 1)$, and $\Pi_e^{t}(s) = \Pi_e^{t}$ once $s \geq s_0$.

%

\subsection{Routing-Projection Symbiosis}
\label{subsec:symbiosis}

Section~\ref{subsec:timescale} proved that routing cannot realize parameter-space isolation; with gradient subspace projection now discharging that role, the remaining question is what routing should do instead. We show that routing recovers a new role: capacity allocation through load matching.

On the rank-$r$ feasible manifold of Section~\ref{subsec:pefosu}, per-expert plasticity capacity scales linearly as $2r(d-k_e)$ in the cumulative per-expert protection rank $k_e$. At horizon $N$ with $\sum_e k_e = Nk$,
\begin{equation}
    \underbrace{2r(d - Nk)}_{\text{monolithic}} \,<\, \underbrace{2r(Ed - Nk)}_{\text{routed (any allocation)}},
    \label{eq:capacity}
\end{equation}
the right-hand side is invariant to the split between uniform and polarized routing because the linear function attains equality under any allocation summing to $Nk$. Read as a saturation-horizon extension over single-basis methods (KeepLoRA~\cite{Luo2026ICLR} and earlier gradient-projection methods): single-basis capacity reaches zero at $Nk = d$, per-expert at $Nk = Ed$, an $E$-fold extension.

The substantive role of polarization is not capacity multiplication but \emph{load matching}. Under polarized routing, $R_e^{c,t}$ concentrates each per-expert basis on its matched tasks' gradient directions; under uniform routing every covariance averages all tasks at $\pi_e \approx 1/E$, and the fixed-$k$ truncation retains only the globally dominant task while minority tasks fall below the spectral cutoff on every expert simultaneously. Fig.~\ref{fig:symbiosis} confirms this: $E_1$ saturates near $0.5$ by task ten while $E_2$--$E_4$ retain residual capacity; the distribution tracks input frequency rather than uniform allocation.

Routing and projection are therefore complementary: projection closes the forgetting gap routing cannot, and routing closes the basis-specificity gap a monolithic projection alone cannot.


\subsection{Training Pipeline}
\label{subsec:pipeline}

Algorithm~\ref{alg:prism} bundles the mechanisms of Sections~\ref{subsec:pefosu}--\ref{subsec:symbiosis} into a single federated lifecycle. Within each task, the per-round loop is a pair of client training and server averaging; at the end of the task, the PE-FOSU subspace union updates the basis that governs projection for the next task.

\begin{algorithm}[t!]
\caption{\method{} Training Lifecycle (\colorbox{rgb:red!2,65;green!30,60;blue!20,125}{Client Local Projection}, \colorbox{rgb:red!30,155;green!20,20;blue!20,30}{Server FedAvg}, and \colorbox{rgb:red!2,65;green!30,90;blue!20,125}{Task-End PE-FOSU})}
\label{alg:prism}
\textbf{Input:} $C$ clients, $N$ tasks, $R$ rounds per task, $E$ experts, per-expert budget $\bar k$, warmup length $s_0$
\begin{algorithmic}[1]
\STATE \textbf{Initialization:} $U_e^{\leq 0} \leftarrow \varnothing$ for $e \in \{1, \ldots, E\}$; read $\gamma_l$ from the diagnostic setup (Section~\ref{subsec:phenomenon}); compute $\{k_l^{*}\}_{l=1}^{L}$ via Eq.~\eqref{eq:waterfill}
\FOR{task $t = 1, \ldots, N$}
    \FOR{round $r = 1, \ldots, R$}
        \colorbox{rgb:red!2,65;green!30,60;blue!20,125}{
        \parbox{0.82\columnwidth}{
        \STATE \textcolor{gray}{$\triangleright$ \textit{Clients perform local training with dual-factor projection:}}
        \FOR{each client $c \in \{1, \ldots, C\}$ \textbf{in parallel}}
            \STATE Receive $\theta^{r,t}$ and basis $\{U_e^{\leq t-1}\}_{e=1}^{E}$
            \FOR{batch $(v, q, a) \sim \mathcal{D}_t^c$ at step $s$}
                \STATE Compute $g_e^{A}, g_e^{B}$ for each $e \in \mathcal{S}_K(x)$
                \STATE Apply Kronecker bilateral projection (Eq.~\eqref{eq:kron_proj}) with warmup weight $\alpha(s)$
                \STATE Update $A_e \leftarrow A_e - \eta\, \tilde{g}_e^{A}$, $B_e \leftarrow B_e - \eta\, \tilde{g}_e^{B}$
                \STATE Accumulate routing-weighted covariance $R_e^{c,t} = \sum_i \pi_e(x_i)\, g_{i,e}\, g_{i,e}^{\top}$
            \ENDFOR
            \STATE Upload $\theta_c^{r,t}$ to the server
        \ENDFOR
        }}

        \colorbox{rgb:red!30,155;green!20,20;blue!20,30}{
        \parbox{0.82\columnwidth}{
        \STATE \textcolor{gray}{$\triangleright$ \textit{Server performs FedAvg on LoRA parameters:}}
        \STATE $\theta^{r+1,t} \leftarrow \sum_{c=1}^{C} w_c\, \theta_c^{r,t}$ with $w_c = |\mathcal{D}_t^c|/|\mathcal{D}_t|$; broadcast
        }}
    \ENDFOR

    \colorbox{rgb:red!2,65;green!30,90;blue!20,125}{
    \parbox{0.82\columnwidth}{
    \STATE \textcolor{gray}{$\triangleright$ \textit{End-of-task PE-FOSU subspace union:}}
    \FOR{each client $c \in \{1, \ldots, C\}$ \textbf{in parallel}}
        \STATE Factorize $R_e^{c,t} = L_e^{c,t}\,(L_e^{c,t})^{\top}$ at rank $k$; upload $L_e^{c,t}$
    \ENDFOR
    \STATE Extract $(U_e^{\leq t}, \Lambda_e^{\leq t})$ via thin SVD on the concatenation of Eq.~\eqref{eq:pefosu}
    \STATE Broadcast $\{U_e^{\leq t}\}_{e=1}^{E}$
    }}
\ENDFOR
\RETURN $\theta^{R,N}$ and $\{U_e^{\leq N}\}_{e=1}^{E}$
\end{algorithmic}
\end{algorithm}

\section{Experiments}
\label{sec:experiments}

\subsection{Experimental Setup}
\label{subsec:setup}

\textit{1) Benchmarks and Backbones:}
We evaluate \method{} on two federated continual multimodal instruction-tuning benchmarks. The first, \textbf{CoIN-6}, is derived from CoIN~\cite{Chen2024NIPS} in the canonical order ScienceQA $\to$ TextVQA $\to$ ImageNet $\to$ GQA $\to$ VQAv2 $\to$ OCR-VQA, covering selection, classification, and open-ended question answering over six tasks. The second, \textbf{CoIN-Long-10}, is a ten-task extension disjoint from CoIN-6 in the order DocVQA $\to$ OKVQA $\to$ ChartQA $\to$ PathVQA $\to$ A-OKVQA $\to$ InfoVQA $\to$ IconQA $\to$ VizWiz $\to$ CLEVR-Math $\to$ AI2D, designed to evaluate long-horizon capacity allocation. Flickr30k and TextCaps are excluded from CoIN-Long-10 because their free-form captioning labels are incompatible with the exact/contains-match evaluation protocol.

\method{} is evaluated on three multimodal backbones with mixture-of-experts low-rank adaptation (MoE-LoRA) adapters injected at the deep layers: LLaVA-1.5-7B (32 layers, L24--31) and LLaVA-1.5-13B (40 layers, L32--39) are reported in the main tables below; Qwen2.5-VL-7B~\cite{Bai2025ARXIV}, the primary diagnostic backbone of Section~\ref{subsec:phenomenon}, is fully evaluated in Supp~B. Each backbone is configured with $E = 4$ experts, rank $r = 8$, scaling $\alpha = 16$, and top-$1$ routing. Each task is partitioned across $C = 5$ clients under a Dirichlet($\beta = 0.3$) split~\cite{Hsu2019ARXIV} and aggregated through federated averaging (FedAvg)~\cite{McMahan2017AISTATS}, following the federated protocol of Fed-Duet~\cite{Guo2026ICLR}. Client-scale and $\beta$ sensitivity analyses appear in Supp~B.

\textit{2) Comparison Methods:}
We compare \method{} against sixteen baselines, grouped as in Table~\ref{tab:main_coin6}. \textit{(i)}~Reference bounds: the zero-shot backbone and centralized multi-task LoRA~\cite{Hu2022ICLR}. \textit{(ii)}~General-purpose federated continual learning baselines: FedProx~\cite{Li2020MLSys}, Fed-EWC~\cite{Kirkpatrick2017PNAS}, Fed-LwF~\cite{Li2016ECCV}, and Fed-Replay~\cite{Rebuffi2017CVPR}. \textit{(iii)}~Federated subspace-protection methods: Fed-GPM~\cite{Saha2021ICLR}, Fed-O-LoRA~\cite{Wang2023EMNLP}, Fed-KeepLoRA~\cite{Luo2026ICLR}, and Fed-SplitLoRA~\cite{Qiu2026ICLR}. \textit{(iv)}~Federated multimodal and MoE-LoRA continual learners: Fed-MoELoRA~\cite{Chen2024NIPS}, Fed-SMoLoRA~\cite{Wang2025ICCV}, Fed-MoDE~\cite{Wei2025NIPS}, Fed-PCLR~\cite{Meng2026ICLR}, Fed-EProj~\cite{He2026TIP}, and Fed-Duet~\cite{Guo2026ICLR}. Non-federated methods are adapted to the federated setting by applying FedAvg to their local update rule with published default hyperparameters. Fed-LwF is omitted on LLaVA-1.5-13B because a frozen teacher copy doubles the 13B memory footprint beyond the per-GPU memory budget.

\textit{3) Evaluation Metrics:}
Let $R_{i,j}$ be the test accuracy on $T_j$ after training on $T_i$, with $N$ the total number of tasks ($N = 6$ or $10$). Following~\cite{LopezPaz2017NIPS}, we report average accuracy $\mathrm{AA} = (1/N)\sum_{j=1}^{N} R_{N,j}$, backward transfer $\mathrm{BWT} = (1/(N{-}1)) \sum_{j=1}^{N-1} (R_{N,j} - R_{j,j})$ (forgetting), and forward transfer $\mathrm{FWT} = (1/(N{-}1)) \sum_{j=2}^{N} R_{j-1,j}$.

\textit{4) Implementation Details:}
All experiments run on four NVIDIA A100 80\,GB GPUs. Following Fed-Duet's federated protocol~\cite{Guo2026ICLR} and the single-epoch convention of multimodal continual instruction tuning~\cite{Chen2024NIPS,Meng2026ICLR}, each task is trained for one FedAvg round with one local epoch per client; this schedule is identical across \method{} and all baselines. Optimization uses AdamW at learning rate $2 \times 10^{-4}$ under cosine decay with a per-client batch size of 16. The Per-Expert Federated Orthogonal Subspace Union (PE-FOSU) budget ratio is fixed at $k/r_o \approx 0.75$ following Fig.~\ref{fig:entanglement}, and the warmup length is $s_0 = 1$ epoch. Reported numbers are averaged over three seeds.

\subsection{Main Results}
\label{subsec:main_results}

\begin{table*}[t]
\centering
\caption{Main results on CoIN-6 at $C=5$. AA and plasticity are in \%; BWT and FWT are percentage points. \best{Red} is the best and \second{blue} the second across competitive methods within each column. \textcolor{gray}{\textit{Italic gray}} rows are reference bounds (zero-shot floor and joint-training oracle), not ranked.}
\label{tab:main_coin6}
\renewcommand{\arraystretch}{1.15}
\setlength{\tabcolsep}{1.5pt}
\footnotesize
\begin{tabular}{c|l|ccc|ccc|ccc}
\toprule
& & \multicolumn{3}{c|}{$\beta=0.1$} & \multicolumn{3}{c|}{$\beta=0.3$} & \multicolumn{3}{c}{$\beta=0.5$} \\
\cmidrule(lr){3-5} \cmidrule(lr){6-8} \cmidrule(lr){9-11}
Backbone & Method & AA & BWT & FWT & AA & BWT & FWT & AA & BWT & FWT \\
\midrule
\multirow{17}{*}{\rotatebox{90}{LLaVA-1.5-7B}} & Zero-shot & {\color{gray}\textit{23.55}\std{0.38}} & {\color{gray}\textit{+0.17}\std{0.26}} & {\color{gray}\textit{-0.02}\std{0.23}} & {\color{gray}\textit{24.05}\std{0.32}} & {\color{gray}\textit{-0.22}\std{0.24}} & {\color{gray}\textit{+0.01}\std{0.18}} & {\color{gray}\textit{23.61}\std{0.31}} & {\color{gray}\textit{+0.14}\std{0.25}} & {\color{gray}\textit{-0.03}\std{0.21}} \\
 & Multi-task & {\color{gray}\textit{71.89}\std{0.29}} & {\color{gray}\textit{-0.02}\std{0.18}} & --- & {\color{gray}\textit{72.18}\std{0.36}} & {\color{gray}\textit{+0.26}\std{0.21}} & --- & {\color{gray}\textit{71.70}\std{0.21}} & {\color{gray}\textit{-0.05}\std{0.26}} & --- \\
\cmidrule(lr){2-11}
 & FedProx~\cite{Li2020MLSys}~\venue{MLSys'20} & 56.12\std{0.46} & -14.42\std{0.98} & -0.11\std{0.43} & 56.61\std{0.44} & -15.09\std{0.65} & -0.22\std{0.35} & 57.40\std{0.47} & -16.06\std{0.76} & -0.36\std{0.33} \\
 & Fed-EWC~\cite{Kirkpatrick2017PNAS}~\venue{PNAS'17} & 51.69\std{0.73} & -7.74\std{1.02} & -1.00\std{0.35} & 52.46\std{0.59} & -8.54\std{0.51} & -0.94\std{0.37} & 52.69\std{0.80} & -8.85\std{0.48} & -1.08\std{0.34} \\
 & Fed-LwF~\cite{Li2016ECCV}~\venue{ECCV'16} & 57.90\std{0.57} & -12.06\std{0.55} & -0.50\std{0.27} & 59.07\std{0.48} & -12.55\std{0.91} & -0.40\std{0.43} & 59.20\std{0.56} & -12.80\std{0.92} & -0.50\std{0.25} \\
 & Fed-Replay~\cite{Rebuffi2017CVPR}~\venue{CVPR'17} & 58.29\std{0.81} & -9.16\std{0.54} & +0.14\std{0.34} & 58.84\std{0.55} & -10.01\std{0.73} & +0.32\std{0.27} & 59.40\std{0.51} & -10.37\std{1.01} & +0.27\std{0.34} \\
\cmidrule(lr){2-11}
 & Fed-GPM~\cite{Saha2021ICLR}~\venue{ICLR'21} & 53.27\std{0.69} & -11.38\std{0.51} & -0.90\std{0.36} & 54.05\std{0.49} & -11.78\std{0.79} & -1.01\std{0.43} & 54.89\std{0.62} & -12.23\std{0.67} & -0.77\std{0.22} \\
 & Fed-O-LoRA~\cite{Wang2023EMNLP}~\venue{EMNLP'23} & 48.22\std{0.72} & -10.72\std{0.76} & -1.39\std{0.41} & 48.82\std{0.81} & -11.31\std{0.80} & -1.36\std{0.43} & 49.34\std{0.51} & -12.18\std{0.97} & -1.35\std{0.38} \\
 & Fed-KeepLoRA~\cite{Luo2026ICLR}~\venue{ICLR'26} & 59.58\std{0.50} & -6.48\std{0.68} & -0.63\std{0.23} & 60.74\std{0.50} & -6.67\std{0.97} & -0.62\std{0.37} & 60.55\std{0.74} & -7.04\std{0.86} & -0.66\std{0.22} \\
 & Fed-SplitLoRA~\cite{Qiu2026ICLR}~\venue{ICLR'26} & 51.19\std{0.79} & -8.51\std{0.70} & -0.72\std{0.24} & 52.24\std{0.57} & -9.29\std{0.79} & -0.68\std{0.26} & 52.48\std{0.68} & -9.89\std{0.69} & -0.84\std{0.43} \\
\cmidrule(lr){2-11}
 & Fed-MoELoRA~\cite{Chen2024NIPS}~\venue{NeurIPS'24} & 55.89\std{0.56} & -8.77\std{0.54} & -0.20\std{0.30} & 56.80\std{0.57} & -9.27\std{1.05} & -0.16\std{0.28} & 57.04\std{0.65} & -9.95\std{0.84} & -0.38\std{0.24} \\
 & Fed-SMoLoRA~\cite{Wang2025ICCV}~\venue{ICCV'25} & 56.38\std{0.47} & -7.80\std{0.82} & -0.13\std{0.37} & 57.66\std{0.51} & -8.33\std{0.83} & -0.21\std{0.43} & 57.62\std{0.79} & -8.90\std{0.70} & -0.21\std{0.42} \\
 & Fed-MoDE~\cite{Wei2025NIPS}~\venue{NeurIPS'25} & 61.64\std{0.49} & -11.87\std{0.93} & +0.01\std{0.31} & 62.37\std{0.48} & -12.09\std{0.64} & +0.39\std{0.36} & 62.78\std{0.55} & -12.64\std{0.81} & \second{+0.37}\std{0.38} \\
 & Fed-PCLR~\cite{Meng2026ICLR}~\venue{ICLR'26} & 51.74\std{0.62} & \best{-3.24}\std{0.63} & -1.42\std{0.27} & 52.55\std{0.60} & \second{-3.98}\std{0.68} & -1.49\std{0.36} & 53.24\std{0.77} & \second{-4.75}\std{0.97} & -1.28\std{0.38} \\
 & Fed-EProj~\cite{He2026TIP}~\venue{TIP'26} & 57.01\std{0.70} & -13.92\std{0.46} & -0.40\std{0.42} & 57.68\std{0.44} & -14.53\std{0.60} & -0.34\std{0.33} & 58.05\std{0.58} & -15.32\std{1.00} & -0.39\std{0.24} \\
 & Fed-Duet~\cite{Guo2026ICLR}~\venue{ICLR'26} & \second{62.07}\std{0.48} & -7.08\std{0.95} & \second{+0.50}\std{0.25} & \second{63.04}\std{0.78} & -7.55\std{0.58} & \second{+0.42}\std{0.42} & \second{62.87}\std{0.74} & -8.14\std{0.69} & +0.35\std{0.23} \\
\cmidrule(lr){2-11}
\rowcolor{prismblue!12}
 & \method{} (Ours) & \best{65.73}\std{0.35} & \second{-3.64}\std{0.47} & \best{+2.32}\std{0.26} & \best{66.27}\std{0.36} & \best{-2.20}\std{0.34} & \best{+2.47}\std{0.17} & \best{66.48}\std{0.33} & \best{-2.16}\std{0.37} & \best{+2.54}\std{0.16} \\
\midrule
\multirow{17}{*}{\rotatebox{90}{LLaVA-1.5-13B}} & Zero-shot & {\color{gray}\textit{26.10}\std{0.33}} & {\color{gray}\textit{+0.59}\std{0.17}} & {\color{gray}\textit{+0.34}\std{0.16}} & {\color{gray}\textit{25.99}\std{0.29}} & {\color{gray}\textit{+0.42}\std{0.20}} & {\color{gray}\textit{+0.16}\std{0.22}} & {\color{gray}\textit{26.06}\std{0.35}} & {\color{gray}\textit{+0.78}\std{0.17}} & {\color{gray}\textit{+0.34}\std{0.21}} \\
 & Multi-task & {\color{gray}\textit{74.99}\std{0.36}} & {\color{gray}\textit{+0.56}\std{0.17}} & --- & {\color{gray}\textit{75.01}\std{0.34}} & {\color{gray}\textit{+0.63}\std{0.19}} & --- & {\color{gray}\textit{74.51}\std{0.26}} & {\color{gray}\textit{+0.45}\std{0.21}} & --- \\
\cmidrule(lr){2-11}
 & FedProx~\cite{Li2020MLSys}~\venue{MLSys'20} & 58.52\std{0.76} & -14.05\std{0.73} & -0.08\std{0.33} & 59.32\std{0.59} & -14.53\std{0.62} & -0.30\std{0.30} & 60.19\std{0.67} & -15.25\std{1.04} & +0.07\std{0.45} \\
 & Fed-EWC~\cite{Kirkpatrick2017PNAS}~\venue{PNAS'17} & 54.29\std{0.61} & -7.40\std{0.84} & -1.07\std{0.22} & 55.23\std{0.63} & -7.78\std{0.83} & -0.71\std{0.37} & 55.34\std{0.62} & -8.45\std{0.83} & -0.91\std{0.35} \\
 & Fed-LwF~\cite{Li2016ECCV}~\venue{ECCV'16} & 60.72\std{0.74} & -10.92\std{0.49} & -0.04\std{0.40} & 61.55\std{0.42} & -11.58\std{0.92} & -0.19\std{0.38} & 61.66\std{0.72} & -12.36\std{0.95} & -0.13\std{0.39} \\
 & Fed-Replay~\cite{Rebuffi2017CVPR}~\venue{CVPR'17} & 60.71\std{0.65} & -8.50\std{0.99} & \second{+0.52}\std{0.34} & 61.44\std{0.55} & -8.90\std{0.62} & +0.53\std{0.42} & 61.97\std{0.49} & -10.04\std{0.74} & +0.36\std{0.32} \\
\cmidrule(lr){2-11}
 & Fed-GPM~\cite{Saha2021ICLR}~\venue{ICLR'21} & 56.12\std{0.61} & -10.81\std{0.69} & -0.71\std{0.22} & 56.60\std{0.70} & -11.02\std{0.98} & -0.60\std{0.37} & 57.09\std{0.38} & -12.08\std{0.53} & -0.65\std{0.25} \\
 & Fed-O-LoRA~\cite{Wang2023EMNLP}~\venue{EMNLP'23} & 51.04\std{0.77} & -10.12\std{0.85} & -1.19\std{0.26} & 51.76\std{0.75} & -10.54\std{0.70} & -1.24\std{0.24} & 51.96\std{0.77} & -11.50\std{0.83} & -0.94\std{0.44} \\
 & Fed-KeepLoRA~\cite{Luo2026ICLR}~\venue{ICLR'26} & 62.62\std{0.40} & -5.53\std{0.69} & -0.42\std{0.37} & 63.14\std{0.82} & -5.80\std{0.77} & -0.44\std{0.37} & 63.66\std{0.54} & -6.59\std{1.02} & -0.35\std{0.41} \\
 & Fed-SplitLoRA~\cite{Qiu2026ICLR}~\venue{ICLR'26} & 53.77\std{0.57} & -7.93\std{0.56} & -0.65\std{0.29} & 54.55\std{0.57} & -8.29\std{1.01} & -0.78\std{0.44} & 55.13\std{0.55} & -9.38\std{0.79} & -0.76\std{0.37} \\
\cmidrule(lr){2-11}
 & Fed-MoELoRA~\cite{Chen2024NIPS}~\venue{NeurIPS'24} & 58.86\std{0.69} & -8.47\std{1.01} & +0.18\std{0.41} & 59.58\std{0.38} & -8.64\std{0.57} & -0.05\std{0.40} & 59.68\std{0.82} & -9.55\std{0.64} & +0.15\std{0.43} \\
 & Fed-SMoLoRA~\cite{Wang2025ICCV}~\venue{ICCV'25} & 59.10\std{0.75} & -7.38\std{0.47} & +0.11\std{0.26} & 59.70\std{0.53} & -7.97\std{0.65} & +0.18\std{0.29} & 60.53\std{0.80} & -8.17\std{0.81} & +0.26\std{0.33} \\
 & Fed-MoDE~\cite{Wei2025NIPS}~\venue{NeurIPS'25} & 64.83\std{0.51} & -10.58\std{0.47} & +0.47\std{0.30} & 65.50\std{0.47} & -11.51\std{0.80} & +0.58\std{0.24} & 65.82\std{0.43} & -12.02\std{0.84} & \second{+0.55}\std{0.21} \\
 & Fed-PCLR~\cite{Meng2026ICLR}~\venue{ICLR'26} & 54.67\std{0.53} & \second{-3.10}\std{0.85} & -1.17\std{0.37} & 55.34\std{0.72} & \second{-3.27}\std{0.65} & -1.01\std{0.39} & 55.55\std{0.42} & \second{-4.10}\std{0.59} & -1.24\std{0.40} \\
 & Fed-EProj~\cite{He2026TIP}~\venue{TIP'26} & 59.55\std{0.66} & -13.36\std{0.55} & -0.33\std{0.36} & 60.56\std{0.49} & -14.33\std{0.82} & -0.41\std{0.26} & 60.74\std{0.42} & -14.65\std{0.84} & -0.27\std{0.31} \\
 & Fed-Duet~\cite{Guo2026ICLR}~\venue{ICLR'26} & \second{64.98}\std{0.81} & -6.81\std{0.93} & +0.45\std{0.38} & \second{65.79}\std{0.48} & -7.18\std{0.46} & \second{+0.61}\std{0.36} & \second{66.15}\std{0.84} & -7.18\std{0.71} & +0.42\std{0.41} \\
\cmidrule(lr){2-11}
\rowcolor{prismblue!12}
 & \method{} (Ours) & \best{69.18}\std{0.34} & \best{-1.70}\std{0.54} & \best{+2.84}\std{0.16} & \best{69.54}\std{0.47} & \best{-1.49}\std{0.48} & \best{+2.74}\std{0.25} & \best{70.10}\std{0.37} & \best{-2.00}\std{0.54} & \best{+2.76}\std{0.22} \\
\bottomrule
\end{tabular}
\end{table*}

\begin{table*}[t]
\centering
\caption{Main results on CoIN-Long-10 (long-sequence stress test) at $C=5$. AA and plasticity are in \%; BWT and FWT are percentage points. \best{Red} is the best and \second{blue} the second across competitive methods within each column. \textcolor{gray}{\textit{Italic gray}} rows are reference bounds (zero-shot floor and joint-training oracle), not ranked.}
\label{tab:main_long10}
\renewcommand{\arraystretch}{1.15}
\setlength{\tabcolsep}{1.5pt}
\footnotesize
\begin{tabular}{c|l|ccc|ccc|ccc}
\toprule
& & \multicolumn{3}{c|}{$\beta=0.1$} & \multicolumn{3}{c|}{$\beta=0.3$} & \multicolumn{3}{c}{$\beta=0.5$} \\
\cmidrule(lr){3-5} \cmidrule(lr){6-8} \cmidrule(lr){9-11}
Backbone & Method & AA & BWT & FWT & AA & BWT & FWT & AA & BWT & FWT \\
\midrule
\multirow{17}{*}{\rotatebox{90}{LLaVA-1.5-7B}} & Zero-shot & {\color{gray}\textit{21.60}\std{0.34}} & {\color{gray}\textit{-0.13}\std{0.21}} & {\color{gray}\textit{-0.02}\std{0.16}} & {\color{gray}\textit{21.56}\std{0.35}} & {\color{gray}\textit{+0.16}\std{0.32}} & {\color{gray}\textit{-0.14}\std{0.23}} & {\color{gray}\textit{21.22}\std{0.31}} & {\color{gray}\textit{+0.06}\std{0.27}} & {\color{gray}\textit{-0.17}\std{0.23}} \\
 & Multi-task & {\color{gray}\textit{52.71}\std{0.31}} & {\color{gray}\textit{-0.23}\std{0.19}} & --- & {\color{gray}\textit{52.72}\std{0.30}} & {\color{gray}\textit{-0.15}\std{0.28}} & --- & {\color{gray}\textit{52.71}\std{0.30}} & {\color{gray}\textit{+0.11}\std{0.18}} & --- \\
\cmidrule(lr){2-11}
 & FedProx~\cite{Li2020MLSys}~\venue{MLSys'20} & 30.33\std{0.46} & -25.83\std{0.74} & -0.85\std{0.29} & 31.04\std{0.52} & -26.25\std{0.87} & -0.64\std{0.41} & 31.44\std{0.54} & -26.86\std{0.63} & -0.85\std{0.32} \\
 & Fed-EWC~\cite{Kirkpatrick2017PNAS}~\venue{PNAS'17} & 26.82\std{0.41} & -13.84\std{0.97} & -1.82\std{0.22} & 28.16\std{0.51} & -14.59\std{0.84} & -2.00\std{0.21} & 28.17\std{0.70} & -14.95\std{1.05} & -1.73\std{0.26} \\
 & Fed-LwF~\cite{Li2016ECCV}~\venue{ECCV'16} & 31.39\std{0.73} & -22.74\std{1.03} & -0.72\std{0.41} & 32.05\std{0.60} & -23.24\std{0.76} & -1.00\std{0.25} & 32.93\std{0.40} & -24.03\std{0.57} & -0.79\std{0.34} \\
 & Fed-Replay~\cite{Rebuffi2017CVPR}~\venue{CVPR'17} & 38.99\std{0.77} & -12.81\std{0.76} & -0.12\std{0.28} & 39.90\std{0.46} & -12.86\std{0.81} & \second{-0.02}\std{0.36} & 40.31\std{0.53} & -13.53\std{0.81} & \second{+0.09}\std{0.30} \\
\cmidrule(lr){2-11}
 & Fed-GPM~\cite{Saha2021ICLR}~\venue{ICLR'21} & 27.37\std{0.81} & -21.41\std{1.04} & -1.58\std{0.34} & 28.48\std{0.54} & -21.92\std{0.80} & -1.55\std{0.25} & 28.72\std{0.53} & -22.21\std{0.98} & -1.52\std{0.20} \\
 & Fed-O-LoRA~\cite{Wang2023EMNLP}~\venue{EMNLP'23} & 20.96\std{0.57} & -21.95\std{0.62} & -2.04\std{0.45} & 22.01\std{0.71} & -22.36\std{0.63} & -2.19\std{0.26} & 22.40\std{0.52} & -23.23\std{0.99} & -1.92\std{0.34} \\
 & Fed-KeepLoRA~\cite{Luo2026ICLR}~\venue{ICLR'26} & 38.29\std{0.71} & -10.26\std{0.99} & -1.10\std{0.33} & 39.13\std{0.52} & -10.64\std{0.85} & -1.17\std{0.33} & 39.30\std{0.62} & -11.16\std{0.75} & -1.27\std{0.29} \\
 & Fed-SplitLoRA~\cite{Qiu2026ICLR}~\venue{ICLR'26} & 24.46\std{0.80} & -17.42\std{0.63} & -1.49\std{0.34} & 24.98\std{0.47} & -18.06\std{0.62} & -1.31\std{0.43} & 25.34\std{0.49} & -18.99\std{0.91} & -1.59\std{0.37} \\
\cmidrule(lr){2-11}
 & Fed-MoELoRA~\cite{Chen2024NIPS}~\venue{NeurIPS'24} & 30.91\std{0.49} & -16.89\std{0.66} & -0.80\std{0.29} & 31.44\std{0.76} & -17.44\std{0.74} & -0.77\std{0.21} & 31.91\std{0.79} & -18.03\std{0.62} & -0.79\std{0.41} \\
 & Fed-SMoLoRA~\cite{Wang2025ICCV}~\venue{ICCV'25} & 33.85\std{0.62} & -11.58\std{0.68} & -0.30\std{0.44} & 34.93\std{0.51} & -12.24\std{0.92} & -0.38\std{0.42} & 35.42\std{0.45} & -12.73\std{0.46} & -0.32\std{0.37} \\
 & Fed-MoDE~\cite{Wei2025NIPS}~\venue{NeurIPS'25} & 37.31\std{0.68} & -21.44\std{0.50} & -0.21\std{0.22} & 37.99\std{0.70} & -22.02\std{1.01} & -0.42\std{0.43} & 38.33\std{0.56} & -22.61\std{0.62} & -0.39\std{0.39} \\
 & Fed-PCLR~\cite{Meng2026ICLR}~\venue{ICLR'26} & 23.77\std{0.79} & \second{-8.77}\std{1.01} & -2.20\std{0.44} & 24.73\std{0.60} & \second{-9.16}\std{0.48} & -2.25\std{0.21} & 25.06\std{0.79} & \second{-9.94}\std{0.72} & -2.19\std{0.24} \\
 & Fed-EProj~\cite{He2026TIP}~\venue{TIP'26} & 31.37\std{0.43} & -24.59\std{0.88} & -0.84\std{0.22} & 32.11\std{0.63} & -25.08\std{0.46} & -1.11\std{0.29} & 32.66\std{0.67} & -26.30\std{0.45} & -0.97\std{0.22} \\
 & Fed-Duet~\cite{Guo2026ICLR}~\venue{ICLR'26} & \second{41.18}\std{0.65} & -11.42\std{0.67} & \second{-0.00}\std{0.32} & \second{41.66}\std{0.84} & -11.54\std{0.85} & -0.15\std{0.37} & \second{42.39}\std{0.62} & -11.66\std{0.50} & -0.10\std{0.21} \\
\cmidrule(lr){2-11}
\rowcolor{prismblue!12}
 & \method{} (Ours) & \best{46.78}\std{0.45} & \best{-2.53}\std{0.32} & \best{+2.07}\std{0.21} & \best{47.72}\std{0.31} & \best{-2.86}\std{0.33} & \best{+1.93}\std{0.23} & \best{47.68}\std{0.33} & \best{-2.81}\std{0.54} & \best{+2.11}\std{0.26} \\
\midrule
\multirow{17}{*}{\rotatebox{90}{LLaVA-1.5-13B}} & Zero-shot & {\color{gray}\textit{23.58}\std{0.35}} & {\color{gray}\textit{+0.78}\std{0.24}} & {\color{gray}\textit{+0.16}\std{0.17}} & {\color{gray}\textit{23.92}\std{0.34}} & {\color{gray}\textit{+0.89}\std{0.18}} & {\color{gray}\textit{+0.20}\std{0.16}} & {\color{gray}\textit{23.49}\std{0.36}} & {\color{gray}\textit{+0.62}\std{0.25}} & {\color{gray}\textit{+0.22}\std{0.24}} \\
 & Multi-task & {\color{gray}\textit{55.34}\std{0.37}} & {\color{gray}\textit{+0.61}\std{0.24}} & --- & {\color{gray}\textit{55.27}\std{0.33}} & {\color{gray}\textit{+0.69}\std{0.18}} & --- & {\color{gray}\textit{55.54}\std{0.25}} & {\color{gray}\textit{+0.73}\std{0.25}} & --- \\
\cmidrule(lr){2-11}
 & FedProx~\cite{Li2020MLSys}~\venue{MLSys'20} & 33.01\std{0.83} & -25.08\std{0.92} & -0.43\std{0.24} & 33.52\std{0.61} & -25.57\std{0.55} & -0.40\std{0.21} & 34.05\std{0.52} & -26.29\std{0.47} & -0.74\std{0.41} \\
 & Fed-EWC~\cite{Kirkpatrick2017PNAS}~\venue{PNAS'17} & 29.75\std{0.66} & -13.26\std{0.45} & -1.59\std{0.27} & 30.62\std{0.40} & -13.97\std{0.71} & -1.67\std{0.20} & 31.08\std{0.67} & -14.79\std{0.47} & -1.75\std{0.28} \\
 & Fed-LwF~\cite{Li2016ECCV}~\venue{ECCV'16} & 34.28\std{0.51} & -22.11\std{0.70} & -0.56\std{0.44} & 34.77\std{0.77} & -22.74\std{0.80} & -0.89\std{0.41} & 35.39\std{0.58} & -23.84\std{0.76} & -0.62\std{0.29} \\
 & Fed-Replay~\cite{Rebuffi2017CVPR}~\venue{CVPR'17} & 41.54\std{0.59} & -11.71\std{0.51} & \second{+0.29}\std{0.30} & 42.67\std{0.41} & -12.54\std{0.83} & +0.03\std{0.40} & 43.20\std{0.81} & -12.91\std{0.52} & \second{+0.28}\std{0.22} \\
\cmidrule(lr){2-11}
 & Fed-GPM~\cite{Saha2021ICLR}~\venue{ICLR'21} & 30.22\std{0.48} & -20.87\std{0.70} & -1.41\std{0.45} & 30.71\std{0.54} & -21.23\std{0.59} & -1.34\std{0.43} & 31.16\std{0.81} & -22.07\std{0.68} & -1.42\std{0.34} \\
 & Fed-O-LoRA~\cite{Wang2023EMNLP}~\venue{EMNLP'23} & 23.67\std{0.51} & -21.05\std{0.79} & -1.81\std{0.30} & 24.80\std{0.54} & -21.61\std{0.74} & -1.96\std{0.26} & 25.13\std{0.58} & -22.73\std{0.91} & -1.96\std{0.29} \\
 & Fed-KeepLoRA~\cite{Luo2026ICLR}~\venue{ICLR'26} & 41.17\std{0.68} & -9.84\std{0.79} & -0.98\std{0.44} & 41.82\std{0.61} & -10.01\std{0.57} & -0.84\std{0.20} & 41.70\std{0.62} & -10.47\std{0.85} & -0.86\std{0.22} \\
 & Fed-SplitLoRA~\cite{Qiu2026ICLR}~\venue{ICLR'26} & 26.85\std{0.73} & -16.96\std{0.56} & -1.31\std{0.38} & 27.54\std{0.76} & -17.69\std{0.99} & -1.26\std{0.35} & 28.24\std{0.61} & -18.38\std{0.74} & -1.24\std{0.33} \\
\cmidrule(lr){2-11}
 & Fed-MoELoRA~\cite{Chen2024NIPS}~\venue{NeurIPS'24} & 33.42\std{0.65} & -16.01\std{0.63} & -0.41\std{0.40} & 34.47\std{0.50} & -16.75\std{0.76} & -0.66\std{0.25} & 34.97\std{0.80} & -17.05\std{0.87} & -0.52\std{0.42} \\
 & Fed-SMoLoRA~\cite{Wang2025ICCV}~\venue{ICCV'25} & 36.87\std{0.80} & -11.05\std{0.73} & -0.42\std{0.34} & 37.64\std{0.50} & -11.80\std{0.46} & -0.21\std{0.25} & 37.86\std{0.72} & -12.12\std{0.94} & -0.27\std{0.21} \\
 & Fed-MoDE~\cite{Wei2025NIPS}~\venue{NeurIPS'25} & 39.96\std{0.62} & -20.37\std{0.81} & +0.01\std{0.21} & 40.81\std{0.49} & -20.97\std{0.92} & -0.22\std{0.38} & 41.47\std{0.77} & -22.14\std{0.82} & -0.23\std{0.28} \\
 & Fed-PCLR~\cite{Meng2026ICLR}~\venue{ICLR'26} & 26.30\std{0.55} & \second{-8.38}\std{0.53} & -1.91\std{0.41} & 27.30\std{0.48} & \second{-8.57}\std{1.00} & -1.99\std{0.39} & 27.47\std{0.71} & \second{-9.46}\std{0.82} & -2.01\std{0.42} \\
 & Fed-EProj~\cite{He2026TIP}~\venue{TIP'26} & 33.83\std{0.83} & -24.04\std{0.67} & -0.85\std{0.27} & 34.86\std{0.44} & -24.53\std{0.92} & -0.87\std{0.23} & 34.85\std{0.63} & -25.65\std{0.89} & -0.62\std{0.21} \\
 & Fed-Duet~\cite{Guo2026ICLR}~\venue{ICLR'26} & \second{43.88}\std{0.82} & -10.30\std{0.68} & +0.05\std{0.24} & \second{44.86}\std{0.52} & -10.93\std{0.89} & \second{+0.10}\std{0.22} & \second{45.22}\std{0.49} & -11.10\std{0.46} & -0.14\std{0.22} \\
\cmidrule(lr){2-11}
\rowcolor{prismblue!12}
 & \method{} (Ours) & \best{50.89}\std{0.46} & \best{-2.15}\std{0.46} & \best{+2.00}\std{0.16} & \best{50.82}\std{0.43} & \best{-2.24}\std{0.51} & \best{+2.01}\std{0.27} & \best{51.44}\std{0.42} & \best{-2.47}\std{0.39} & \best{+2.38}\std{0.20} \\
\bottomrule
\end{tabular}
\end{table*}

Tables~\ref{tab:main_coin6} and~\ref{tab:main_long10} report the main comparison on CoIN-6 and CoIN-Long-10 across sixteen baselines plus \method{} and three Dirichlet concentrations. \method{} holds the top rank on AA in every configuration and on BWT in all but one: Fed-PCLR~\cite{Meng2026ICLR}'s progressive rank compression edges ahead by $0.4$\,pp on LLaVA-1.5-7B at $\beta=0.1$ (the most adverse non-IID setting). The gap inverts on the longer Long-10 sequence, where \method{}'s per-expert capacity outweighs Fed-PCLR's compression advantage. At the headline cell ($\beta=0.3$, $C=5$), \method{} improves BWT by $+4.47$\,pp on CoIN-6 over the monolithic gradient-subspace baseline Fed-KeepLoRA~\cite{Luo2026ICLR}, consistent with the joint effect of C3--C5 absent from the monolithic baseline. Against the pure MoE router Fed-MoDE~\cite{Wei2025NIPS}, AA rises $+3.90$\,pp while forgetting falls $9.9$\,pp, confirming that routing-weighted gradient projection (C1--C4 jointly) outperforms routing-only specialization. Against the strongest federated multimodal baseline Fed-Duet~\cite{Guo2026ICLR}, AA margins widen from $+3.23$\,pp on CoIN-6 to $+6.06$\,pp on Long-10, consistent with orthogonality-preserving aggregation (C5) accumulating larger advantages over longer sequences.

\begin{figure*}[t]
\centering
\includegraphics[width=0.754\textwidth]{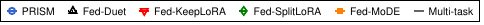}\\[-1.2em]
\subfloat[CoIN-6, LLaVA-7B]{%
    \includegraphics[width=0.245\textwidth]{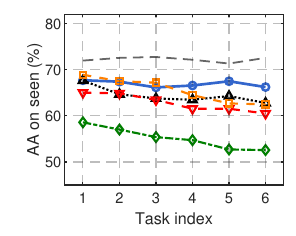}%
    \label{fig:traj:coin6-7b}%
}\hfill
\subfloat[CoIN-6, LLaVA-13B]{%
    \includegraphics[width=0.245\textwidth]{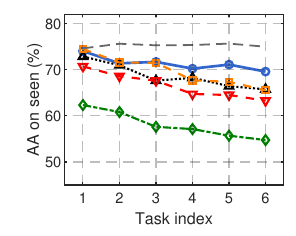}%
    \label{fig:traj:coin6-13b}%
}\hfill
\subfloat[Long-10, LLaVA-7B]{%
    \includegraphics[width=0.245\textwidth]{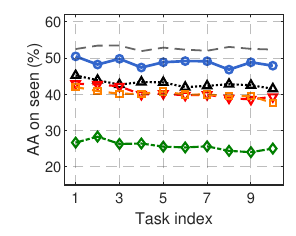}%
    \label{fig:traj:long10-7b}%
}\hfill
\subfloat[Long-10, LLaVA-13B]{%
    \includegraphics[width=0.245\textwidth]{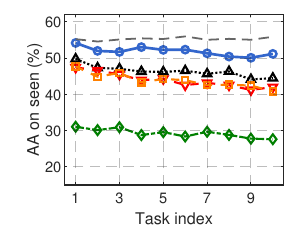}%
    \label{fig:traj:long10-13b}%
}
\caption{Forgetting dynamics across backbones (LLaVA-1.5-7B / 13B) and benchmarks (CoIN-6, CoIN-Long-10). Each panel plots the mean accuracy on all seen tasks at every training stage. \method{} (blue) maintains a near-flat trajectory throughout; the two gradient-subspace baselines (Fed-KeepLoRA, Fed-SplitLoRA) degrade monotonically and steepen on the ten-task sequence as their monolithic basis saturates, and the routing-only and aggregation-only baselines (Fed-MoDE, Fed-Duet) trail further. Multi-task is the centralised upper bound.}
\label{fig:trajectory}
\end{figure*}

Figure~\ref{fig:trajectory} visualizes the forgetting dynamics: \method{}'s trajectory stays nearly flat across all four panels, while every baseline declines monotonically and the decline steepens on the ten-task sequence, with Fed-MoDE and Fed-KeepLoRA falling on opposing sides of the stability--plasticity trade-off.

\subsection{Mechanism and Theoretical Validation}
\label{subsec:mechanism}

\begin{figure*}[t]
\centering
\includegraphics[width=0.88\textwidth]{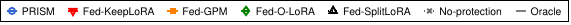}\\[-1.2em]
\subfloat[$\theta_{\min}$ vs task]{%
    \includegraphics[width=0.245\textwidth]{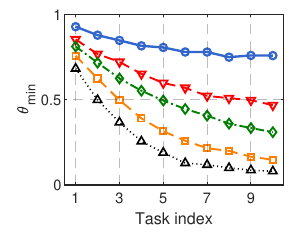}%
    \label{fig:theory:theta}%
}\hspace{0.005\textwidth}
\subfloat[$k/r_o$ sensitivity]{%
    \includegraphics[width=0.245\textwidth]{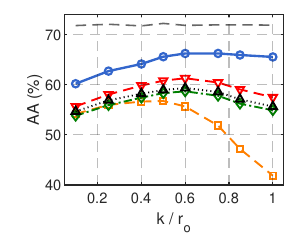}%
    \label{fig:theory:kratio}%
}\hspace{0.005\textwidth}
\subfloat[$\Omega^a$ vs $\Omega^g$ on task pairs]{%
    \includegraphics[width=0.245\textwidth]{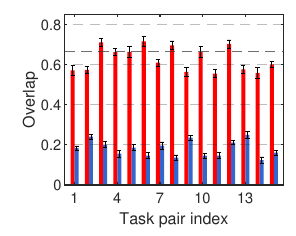}%
    \label{fig:theory:omega}%
}
\caption{Theoretical validation on CoIN-Long-10 (LLaVA-1.5-7B). (a)~$\theta_{\min}$ across tasks: \method{} preserves the projection geometry far better than monolithic gradient-subspace (Fed-KeepLoRA, Fed-SplitLoRA), activation-subspace (Fed-GPM), and no-protection baselines. (b)~Sweeping the budget ratio $k/r_o$: the activation-subspace baseline Fed-GPM collapses beyond $k/r_o=0.6$ as Proposition~2 predicts, while gradient-subspace methods remain stable. \method{} peaks at $k/r_o=0.75$. (c)~Per task pair, the measured activation overlap $\Omega^a$ (red) exceeds the gradient overlap $\Omega^g$ (blue) on all 15 pairs, with $\Omega^a$ respecting the Proposition~2 lower bound (dashed).}
\label{fig:theory}
\end{figure*}

Figure~\ref{fig:theory} validates the three theoretical claims of Section~\ref{sec:method} on real sequences. Panel~(a) tracks $\theta_{\min}$, the smallest principal angle between the accumulated protection basis and the current-task gradient subspace, across ten tasks. \method{} preserves $\theta_{\min} \ge 0.75$ throughout; monolithic gradient-subspace methods degrade to $0.3$--$0.5$, activation-subspace Fed-GPM to $0.15$, and the no-protection baseline collapses to zero, consistent with the forgetting bound in Section~\ref{subsec:entanglement}. Panel~(b) sweeps the budget ratio $k/r_o$: Fed-GPM's AA collapses beyond $k/r_o = 0.6$ as Proposition~2 predicts, while \method{} and Fed-KeepLoRA remain stable on the gradient side, and \method{} peaks at $k/r_o = 0.75$. Panel~(c) reports, per CoIN-6 task pair, the activation overlap $\Omega^a$ (red) and gradient overlap $\Omega^g$ (blue) as paired bars: $\Omega^a > \Omega^g$ on every one of the fifteen pairs, and $\Omega^a$ respects the dashed Proposition~2 lower bound. The three panels back the shift from activation to gradient subspace and from shared to per-expert basis.

\begin{figure}[t]
\centering
\includegraphics[width=\columnwidth]{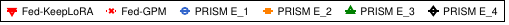}\\[-1.2em]
\subfloat[Null-space, LLaVA-7B]{%
    \includegraphics[width=0.48\linewidth]{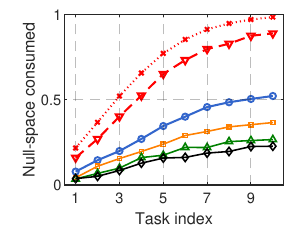}%
    \label{fig:symb:ns7b}%
}\hfill
\subfloat[Null-space, LLaVA-13B]{%
    \includegraphics[width=0.48\linewidth]{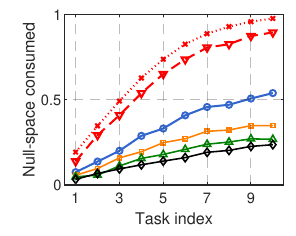}%
    \label{fig:symb:ns13b}%
}
\caption{Routing-projection symbiosis on CoIN-Long-10: per-expert null-space consumption over the 10-task sequence. The activation-subspace baseline Fed-GPM saturates fastest (red dotted) as Proposition~2 predicts, the monolithic gradient-subspace Fed-KeepLoRA follows (red dashed), while \method{}'s four experts share the load heterogeneously and retain capacity, consistent with the load-matching role of polarized routing in Section~\ref{subsec:symbiosis}. Task-expert routing heatmaps after router freeze are deferred to Supp~B.}
\label{fig:symbiosis}
\end{figure}

Figure~\ref{fig:symbiosis} visualizes the routing-projection symbiosis of Section~\ref{subsec:symbiosis}: per-expert null-space consumption over CoIN-Long-10. The activation-subspace Fed-GPM saturates fastest and reaches $0.985$ by task ten as Proposition~2's entanglement bound predicts; monolithic gradient-subspace Fed-KeepLoRA follows at $0.89$; \method{}'s four experts instead absorb the sequence heterogeneously: $E_1$ carries the heaviest load while $E_2$--$E_4$ retain substantial residual capacity. This load distribution embodies the saturation-horizon extension and load matching of Section~\ref{subsec:symbiosis}, Eq.~\eqref{eq:capacity}. Routing-stability metrics confirming the permanence of polarization are deferred to Supp~B.

\subsection{Ablation Study}
\label{subsec:ablation}

\begin{table}[t]
\centering
\caption{Component ablation of \method{} on CoIN-6 at $\beta=0.1$, $C=5$. Each row removes a single mechanism; $\Delta$AA is the drop relative to the full model. AA is in \%; BWT is in percentage points.}
\label{tab:ablation}
\renewcommand{\arraystretch}{1.2}
\setlength{\tabcolsep}{4pt}
\scriptsize
\begin{tabular}{l|cc|cc}
\toprule
& \multicolumn{2}{c|}{LLaVA-1.5-7B} & \multicolumn{2}{c}{LLaVA-1.5-13B} \\
\cmidrule(lr){2-3} \cmidrule(lr){4-5}
Variant & AA & BWT & AA & BWT \\
\midrule
\rowcolor{prismblue!18}
\method{} (full) & \best{65.98\std{0.50}} & \best{-2.25\std{0.42}} & \best{69.18\std{0.43}} & \best{-1.04\std{0.43}} \\
\midrule
\rowcolor{prismblue!5}
w/o PE-FOSU & 61.32\std{0.49} & -6.20\std{0.75} & 64.90\std{0.54} & -5.52\std{0.49} \\
w/o per-expert basis & 62.21\std{0.75} & -5.43\std{0.54} & 65.22\std{0.55} & -4.72\std{0.86} \\
\rowcolor{prismblue!5}
w/o routing-weighted & 62.99\std{0.47} & -4.24\std{0.74} & 66.32\std{0.76} & -3.62\std{0.76} \\
w/o router freeze & 61.79\std{0.81} & -5.22\std{0.50} & 65.40\std{0.63} & -4.54\std{0.71} \\
\rowcolor{prismblue!5}
w/o scheduling & 64.15\std{0.48} & -3.41\std{0.83} & 67.39\std{0.48} & -2.98\std{0.75} \\
w/o warmup ($s_0{=}0$) & 63.11\std{0.83} & -4.44\std{0.96} & 66.23\std{0.64} & -3.64\std{0.94} \\
\rowcolor{prismblue!5}
w/o bilateral (A-only) & 63.37\std{0.45} & -4.15\std{0.63} & 66.33\std{0.46} & -3.32\std{0.99} \\
\bottomrule
\end{tabular}
\end{table}

Table~\ref{tab:ablation} confirms that each design component contributes substantively (numbers below as $7\text{B}/13\text{B}$). Removing PE-FOSU (replacing federated basis aggregation with per-client independent bases) causes the sharpest AA drop ($-4.66/-4.28$\,pp), which confirms C5's necessity under FedAvg. Removing the per-expert basis collapses \method{} to a Fed-KeepLoRA-style monolithic protection and drops AA by $-3.77/-3.96$\,pp with $\Delta$BWT $= -3.18/-3.68$\,pp, in line with the per-expert granularity argument of C3. Unfreezing the router restores routing drift and costs $-4.19/-3.78$\,pp AA with $-2.97/-3.50$\,pp BWT, in line with the routing-stability evidence reported in Supp~B. Interference-informed scheduling contributes the smallest but still non-trivial $-1.83/-1.79$\,pp; its role is to reallocate a fixed budget rather than add new capacity.

Sensitivity to the non-IID concentration $\beta$, client scale $C$, and warmup length $s_0$ is analyzed in Supp~B; across the realistic range of federated configurations, \method{} consistently holds the top position and does not rely on delicate tuning.

%
%

\section{Conclusion}
\label{sec:conclusion}


In federated multimodal continual learning (FMCL) using mixture-of-experts low-rank adaptation (MoE-LoRA), we identified Spurious Isolation: input-space polarization from routing does not guarantee isolation in parameter space. Moreover, when using parameter-efficient fine-tuning (PEFT), activation-subspace protection can cause tasks to become entangled, which we can explain by a dimension-counting argument. To address both of these issues together, we proposed \method{}, in which Per-Expert Federated Orthogonal Subspace Union (PE-FOSU) constructs and maintains a separate gradient subspace basis for each expert that remains orthogonal under federated averaging (FedAvg) by construction, and routing-projection symbiosis reinterprets MoE routing as a capacity allocator. Our experiments on three multimodal backbones, evaluated on CoIN-6 and CoIN-Long-10 and compared to sixteen baselines, show that our approach consistently improves both average accuracy and backward transfer. The improvement widens on the longer ten-task sequence.

\appendices

\renewcommand{\theequation}{S\arabic{equation}}
\renewcommand{\thetheorem}{S\arabic{theorem}}
\renewcommand{\thefigure}{S\arabic{figure}}
\renewcommand{\thetable}{S\arabic{table}}
\setcounter{equation}{0}
\setcounter{theorem}{0}
\setcounter{figure}{0}
\setcounter{table}{0}

\section{Complete Proofs}
\label{supp:proofs}

Notation follows the main text: $g_{i,e} = \nabla_{\theta_e}\ell(\theta; x_i)$, $\pi_e(x_i) \in [0,1]$ with $\sum_e \pi_e(x_i) = 1$, $U_e^{\leq t} \in \mathbb{R}^{d \times k}$ the cumulative protection basis, $P_e^{t} = U_e^{\leq t}(U_e^{\leq t})^{\top}$ and $\Pi_e^{t} = I_d - P_e^{t}$ the orthogonal protection / complement projectors on $\mathbb{R}^d$ with $P_e^{t}\, \Pi_e^{t} = 0$ (distinct from the main-text Kronecker projector $\mathbf{P}_e^{t} = \Pi_e^{t} \otimes \Pi_e^{t}$ on $\mathbb{R}^{d^2}$). Expert and task indices are dropped when unambiguous.

\subsection{Lemma S1: Routing's Operational Constraint}
\label{supp:lemma_routing}

\begin{lemma}[Routing's Operational Constraint]
\label{lem:routing}
Let $\pi : \mathcal{X} \to \Delta^{E-1}$ be a simplex-valued router and let per-sample gradients decompose as $g_{i,e} = \mu_t + \xi_i$ on $\mathcal{D}_t$ with $\|\xi_i\| \le \sigma$. The routing-weighted cumulative gradient $g_e^{t} = \sum_{i \in \mathcal{D}_t} \pi_e(x_i)\, g_{i,e}$ lies in the convex conic hull $\mathcal{A}_e^{t} = \operatorname{cone}\{g_{i,e} : x_i \in \mathcal{D}_t\}$, and its alignment to the cluster mean is bounded by a cone-opening ratio independent of $\pi$:
\begin{equation}
\cos\angle(g_e^{t},\, \mu_t) \,\ge\, \frac{\|\mu_t\| \,-\, \sigma}{\|\mu_t\| \,+\, \sigma} \,=:\, \chi_t.
\label{eq:cone-ceiling}
\end{equation}
\end{lemma}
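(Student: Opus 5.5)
The plan is a direct computation on the decomposition $g_{i,e} = \mu_t + \xi_i$, after disposing of the conic-hull claim. Membership $g_e^{t} \in \mathcal{A}_e^{t}$ is immediate. Each coefficient $\pi_e(x_i)$ is non-negative because $\pi$ is simplex-valued, so $g_e^{t} = \sum_i \pi_e(x_i)\, g_{i,e}$ is by definition a non-negative combination of the generators of $\operatorname{cone}\{g_{i,e}\}$. This is the same observation already used in Eq.~\eqref{eq:conic}.

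For the cosine bound, write $S := \sum_{i \in \mathcal{D}_t} \pi_e(x_i) \ge 0$ and $\Xi := \sum_i \pi_e(x_i)\, \xi_i$. Substituting the decomposition gives $g_e^{t} = S\mu_t + \Xi$. The key estimate is the triangle inequality with non-negative weights, $\|\Xi\| \le \sum_i \pi_e(x_i)\|\xi_i\| \le S\sigma$. This is the only place where non-negativity of $\pi$ enters the quantitative part. I would then bound the numerator and denominator of the cosine separately. By Cauchy--Schwarz,
\[
\langle g_e^{t}, \mu_t\rangle = S\|\mu_t\|^2 + \langle \Xi, \mu_t\rangle \ge S\|\mu_t\|\,(\|\mu_t\| - \sigma),
\]
and
\[
\|g_e^{t}\|\,\|\mu_t\| \le \|\mu_t\|\,(S\|\mu_t\| + S\sigma).
\]
Dividing, the common factor $S\|\mu_t\|$ cancels and leaves exactly $\chi_t = (\|\mu_t\| - \sigma)/(\|\mu_t\| + \sigma)$. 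The routing enters only through $S$, which cancels, so the bound is independent of $\pi$ as claimed.

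The only real obstacle is degenerate cases, where the cosine is undefined and the division step needs care. I would state explicitly the standing assumptions $\mu_t \neq 0$ and $S > 0$, meaning expert $e$ receives nonzero routing mass on $T_t$; with $\pi$ a softmax the latter always holds. Under these assumptions the division is legitimate whenever $g_e^{t} \neq 0$. If $\|\mu_t\| > \sigma$, then $\|g_e^{t}\| \ge S(\|\mu_t\| - \sigma) > 0$, so $g_e^{t}$ cannot vanish and the bound holds outright. If $\|\mu_t\| \le \sigma$, then $\chi_t \le 0$, the statement is weak, and it holds for every nonzero $g_e^{t}$. I would adopt the convention that the inequality is vacuous when $g_e^{t} = 0$.

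Finally, I would remark that the same decomposition gives the sharper bound $\cos\angle(g_e^{t}, \mu_t) \ge \sqrt{1 - \sigma^2/\|\mu_t\|^2}$, since $g_e^{t}/S$ lies in the ball of radius $\sigma$ around $\mu_t$. The stated ratio $\chi_t$ is a convenient weakening of it. That ball picture is what would feed the margin condition (S2) for Proposition~\ref{prop:conflict}.
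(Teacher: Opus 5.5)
Your route is the paper's route step for step: the split $g_e^{t} = S\mu_t + \Xi$, the weighted triangle inequality $\|\Xi\| \le S\sigma$, a Cauchy--Schwarz lower bound on the numerator, a triangle-inequality upper bound on the denominator, and cancellation of $S\|\mu_t\|$. That argument is valid when $\|\mu_t\| \ge \sigma$. Your degenerate-case paragraph, however, contains a genuine error, and it is the same unguarded step the paper's proof takes.

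From $N \ge a$ and $0 < D \le b$ you may conclude $N/D \ge a/b$ only when $a \ge 0$. If $a = S\|\mu_t\|(\|\mu_t\| - \sigma) < 0$, a smaller denominator makes $a/D$ more negative, so the quotient inequality runs the wrong way. The conclusion is not rescued by $\chi_t$ being non-positive either. For $\|\mu_t\| < \sigma$ the ball of radius $\sigma$ around $\mu_t$ contains the origin, so $g_e^{t}/S$ can point antiparallel to $\mu_t$.

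Here is a concrete counterexample. Fix a unit vector $u$ and take two samples with $g_{1,e} = -2u$ and $g_{2,e} = 4u$. Then the cluster mean is $\mu_t = u$, with $\xi_1 = -3u$, $\xi_2 = 3u$ and $\sigma = 3$. Take the softmax-attainable weights $\pi_e(x_1) = 0.9$ and $\pi_e(x_2) = 0.1$. You get $g_e^{t} = -1.4u$, so $\cos\angle(g_e^{t}, \mu_t) = -1 < -1/2 = \chi_t$. Your claim that the bound ``holds for every nonzero $g_e^{t}$'' when $\|\mu_t\| \le \sigma$ is therefore true only at the endpoint $\|\mu_t\| = \sigma$, where the numerator is $\ge 0 = \chi_t$. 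For $\|\mu_t\| < \sigma$ only the trivial bound $-1$ survives.

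The repair is to add the hypothesis $\|\mu_t\| \ge \sigma$ to the lemma. This costs nothing where the paper needs it. The margin condition Eq.~\eqref{eq:dominance} gives $\sigma^2 + \sigma(\|\mu_1\| + \|\mu_2\|) < \delta \le -\langle \mu_1, \mu_2\rangle \le \|\mu_1\|\,\|\mu_2\|$, i.e.\ $(\|\mu_1\| - \sigma)(\|\mu_2\| - \sigma) > 2\sigma^2 \ge 0$. The two factors cannot both be negative: each would lie in $[-\sigma, 0)$, which is empty if $\sigma = 0$ and otherwise caps their product at $\sigma^2 < 2\sigma^2$. Hence $\|\mu_1\|, \|\mu_2\| > \sigma$ holds automatically in the setting of Proposition~\ref{prop:conflict}. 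In any case, that proposition's proof only uses $\|\eta_t\| \le w_t\sigma$, not the cosine bound. Your sharper bound $\sqrt{1 - \sigma^2/\|\mu_t\|^2}$ is correct under this hypothesis. It is also the more natural statement, since being undefined for $\|\mu_t\| < \sigma$ makes the needed restriction explicit.
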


\begin{proof}
Let $w_t = \sum_{i \in \mathcal{D}_t} \pi_e(x_i) > 0$ and decompose $g_e^{t} = w_t\, \mu_t + \eta_t$ with $\eta_t = \sum_i \pi_e(x_i)\, \xi_i$. Non-negativity of $\pi$ places $g_e^{t}$ in $\mathcal{A}_e^{t}$ by definition, and a triangle inequality with $\|\xi_i\| \le \sigma$ yields $\|\eta_t\| \le w_t\, \sigma$. Bounding the cosine via Cauchy--Schwarz and the triangle inequality,
\begin{align*}
\cos\angle(g_e^{t}, \mu_t)
\,&=\,
\frac{w_t\,\|\mu_t\|^2 \,+\, \langle \eta_t,\, \mu_t\rangle}{\|\, w_t \mu_t + \eta_t \,\|\;\|\mu_t\|} \\
\,&\overset{(a)}{\ge}\,
\frac{w_t\, \|\mu_t\|\,(\|\mu_t\| - \sigma)}{w_t\, (\|\mu_t\| + \sigma)\, \|\mu_t\|}
\,=\,
\frac{\|\mu_t\| - \sigma}{\|\mu_t\| + \sigma},
\end{align*}
where $(a)$ lower-bounds the numerator by $\langle \eta_t, \mu_t\rangle \ge -w_t \sigma\, \|\mu_t\|$ (Cauchy--Schwarz) and upper-bounds the denominator by $\|w_t \mu_t + \eta_t\| \le w_t(\|\mu_t\| + \sigma)$ (triangle inequality).
\end{proof}

\subsection{Proof of Proposition 1: Structural Conflict Irreducibility}
\label{supp:proof_prop1}

Quantify the main-text dominance condition as the margin
\begin{equation}
\langle \mu_1,\, \mu_2\rangle \,\le\, -\delta, \qquad \sigma^2 \,+\, \sigma(\|\mu_1\| + \|\mu_2\|) \,<\, \delta,
\label{eq:dominance}
\end{equation}
with $\delta > 0$ and $\sigma = \max(\sigma_1, \sigma_2)$: the first inequality gives opposition an explicit margin, the second keeps the intra-task radius strictly below it. This formalizes the main-text qualitative condition that inter-task variance dominates intra-task variance: $\delta$ quantifies the inter-task opposition and $\sigma$ bounds the intra-task spread.

\begin{proof}[Proof of Proposition~1]
Using the decomposition $g_e^{t} = w_t\, \mu_t + \eta_t$ with $\|\eta_t\| \le w_t\, \sigma$ from Lemma~\ref{lem:routing},
\begin{align*}
\langle g_e^{t_1},\, g_e^{t_2}\rangle
\,&=\, w_1 w_2\, \langle \mu_1,\, \mu_2\rangle \,+\, w_1\, \langle \mu_1,\, \eta_2\rangle \\
&\quad\,+\, w_2\, \langle \eta_1,\, \mu_2\rangle \,+\, \langle \eta_1,\, \eta_2\rangle \\
\,&\overset{(a)}{\le}\, w_1 w_2 \Big(\, {-\delta} \,+\, \sigma\,(\|\mu_1\| + \|\mu_2\|) \,+\, \sigma^2 \,\Big) \\
\,&\overset{(b)}{<}\, 0,
\end{align*}
where $(a)$ applies Cauchy--Schwarz to each cross term with $\|\eta_t\| \le w_t\, \sigma$ and invokes the opposition inequality in Eq.~\eqref{eq:dominance}, and $(b)$ invokes the dominance inequality in the same equation. Since $w_1 w_2 > 0$ and vector norms are positive, $\cos\angle(g_e^{t_1}, g_e^{t_2}) < 0$ irrespective of $\pi$; no choice of router can flip the numerator without leaving the simplex used in Lemma~\ref{lem:routing}.
\end{proof}

\subsection{Proof of Proposition 2: Budget-Constrained Activation Entanglement Bound}
\label{supp:proof_prop2}

Let $\mathbf{H}_t \in \mathbb{R}^{d \times n_t}$ stack the task-$t$ activations, let $V_t$ be the top-$k$ eigenspace of $R_t = \mathbf{H}_t \mathbf{H}_t^{\top}$ with orthonormal basis $U_t \in \mathbb{R}^{d \times k}$ and projection $P_t = U_t U_t^{\top}$, and let $\mathcal{M} \subset \mathbb{R}^d$ denote the $r_o$-dimensional dominant manifold of the frozen backbone so that $V_t \subset \mathcal{M}$ for every $t$.

\begin{proof}[Proof of Proposition~2]
The activation overlap chains through the Frobenius--trace identity and SVD, the Grassmann dimension formula, and the manifold inclusion:
\begin{align*}
\Omega^a(t_1, t_2)
\,&\overset{(a)}{=}\,
\frac{\operatorname{tr}\!\big( P_{t_1}\, P_{t_2} \big)}{k}
\,=\,
\frac{1}{k} \sum_{j=1}^{k} \cos^2 \theta_j \\
\,&\overset{(b)}{\ge}\,
\frac{\dim(\, V_{t_1} \cap V_{t_2} \,)}{k} \\
\,&\overset{(c)}{=}\,
\frac{\dim V_{t_1} \,+\, \dim V_{t_2} \,-\, \dim(\, V_{t_1} + V_{t_2} \,)}{k} \\
\,&\overset{(d)}{\ge}\,
\frac{2k - r_o}{k}
\,=\,
2 \,-\, \frac{r_o}{k},
\end{align*}
where $(a)$ applies the Frobenius identity $\|U_{t_1}^{\top} U_{t_2}\|_F^{\,2} = \operatorname{tr}(P_{t_1} P_{t_2})$ and the SVD of $U_{t_1}^{\top} U_{t_2}$ with principal angles $\{\theta_j\}_{j=1}^{k}$, $(b)$ retains only the $\theta_j = 0$ contributions that span $V_{t_1} \cap V_{t_2}$, $(c)$ is the Grassmann dimension formula, and $(d)$ uses $V_{t_1} + V_{t_2} \subset \mathcal{M}$ with $\dim\mathcal{M} = r_o$. The regime $k > r_o/2$ forces strictly positive overlap; equality is attained once $V_{t_1} + V_{t_2}$ exhausts $\mathcal{M}$.
\end{proof}

\subsection{Bilateral Projection Residual-Order Argument}
\label{supp:bilateral_residual}

Let $P$, $\Pi = I - P$ be the protection / complement projectors on $\mathbb{R}^d$. A gradient step of size $\eta$ maps $(A_e, B_e) \mapsto (A_e^{+}, B_e^{+})$ via
\begin{align}
\text{one-sided:} \;\; \Delta A_e &\,=\, -\eta\, g^{A}\, \Pi, & \Delta B_e &\,=\, -\eta\, g^{B}, \label{eq:one-sided} \\
\text{bilateral:} \;\; \Delta A_e &\,=\, -\eta\, g^{A}\, \Pi, & \Delta B_e &\,=\, -\eta\, \Pi\, g^{B}. \label{eq:bilateral}
\end{align}
LoRA initialization sets $B_e = 0$ at task start, so $P B_e = 0$ initially. The effective update factors as $\mathsf{R} = B_e^{+} A_e^{+} - B_e A_e = B_e\, \Delta A_e \,+\, \Delta B_e\, A_e \,+\, \Delta B_e\, \Delta A_e$, and protection is read off by evaluating $P\, \mathsf{R}\, h$ on $h \in \operatorname{range}(U)$.

The $B_e\, \Delta A_e\, h$ and $\Delta B_e\, \Delta A_e\, h$ terms both vanish by $\Delta A_e\, h = -\eta\, g^{A}\, (\Pi h) = 0$. Only $\Delta B_e\, A_e\, h$ survives, and its projection onto $U$ separates the two schemes:
\begin{align*}
P\, \mathsf{R}\, h \,\big|_{\text{one-sided}} \,&=\, -\eta\, \big( P\, g^{B} \big)\, A_e\, h \,=\, O(\eta), \\
P\, \mathsf{R}\, h \,\big|_{\text{bilateral}} \,&=\, -\eta\, \big( P\, \Pi \big)\, g^{B}\, A_e\, h \,=\, 0.
\end{align*}
The bilateral residual collapses through $P\, \Pi = 0$; the one-sided residual is first-order and accumulates as $-\eta \sum_{s \le S} P\, g^{B,s}$ over $S$ steps, while the bilateral invariant $P B_e = 0$ propagates inductively since $P\, \Delta B_e = -\eta\, (P\, \Pi)\, g^{B} = 0$. The Kronecker structure $\mathbf{P}_e^{t} = \Pi_e^{t} \otimes \Pi_e^{t}$ in the main text is therefore exact, not approximate.

\begin{remark}[Cross-task initial condition]
With LoRA's standard task-start reset $B_e \leftarrow 0$, $P_e^{t} B_e^{(0)} = 0$ holds and bilateral steps preserve the invariant. Without per-task reset, the increment-level guarantee $P_e^{t}\, \Delta(B_e A_e)\, h = 0$ for $h \in \mathrm{range}(U_e^{\leq t})$ still holds at every step: $B_e\, \Delta A_e$ and $\Delta B_e\, \Delta A_e$ vanish through $\Delta A_e\, h = 0$, and $P\, \Delta B_e = 0$ through $P\, \Pi = 0$, regardless of $B_e$'s initial value.
\end{remark}

\subsection{Proof of Proposition 3: FedAvg Preserves Orthogonality}
\label{supp:proof_prop3}

\begin{proof}[Proof of Proposition~3]
After the client-side bilateral projection of Section~\ref{supp:bilateral_residual}, each client gradient satisfies $P\, \tilde g_c^{\,B} = 0$ and $\tilde g_c^{\,A}\, P = 0$. Federated averaging with non-negative simplex weights $\{w_c\}$ collapses both factor sides through a single linearity step:
\begin{align*}
P\, \bar g^{\,B}
\,&=\, P \sum_{c=1}^{C} w_c\, \tilde g_c^{\,B}
\,\overset{(a)}{=}\, \sum_{c=1}^{C} w_c\, \big( P\, \tilde g_c^{\,B} \big)
\,=\, 0, \\
\bar g^{\,A}\, P
\,&=\, \bigg( \sum_{c=1}^{C} w_c\, \tilde g_c^{\,A} \bigg) P
\,\overset{(a)}{=}\, \sum_{c=1}^{C} w_c\, \big( \tilde g_c^{\,A}\, P \big)
\,=\, 0,
\end{align*}
where $(a)$ uses linearity of the projector. Feeding $P\,\bar g^{\,B} = 0$ and $\bar g^{\,A}\, P = 0$ into the residual expansion of Section~\ref{supp:bilateral_residual} gives $P\, \bar{\mathsf{R}} = 0 = \bar{\mathsf{R}}\, P$; the aggregate preserves $U$-orthogonality on both factors without server-side re-orthogonalization.
\end{proof}

\section{Extended Experiments}
\label{supp:experiments}

\subsection{External Sensitivity: Non-IID, Client Scale, and Warmup}
\label{supp:sensitivity}

\begin{figure*}[t]
\centering
\includegraphics[width=0.629\textwidth]{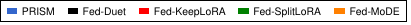}\\[-1.2em]
\subfloat[AA vs $\beta$, $C=5$]{%
    \includegraphics[width=0.245\textwidth]{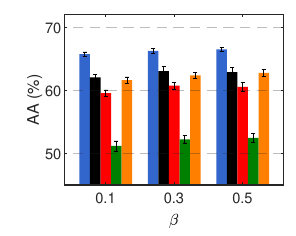}%
    \label{fig:sens:beta}%
}\hspace{0.005\textwidth}
\subfloat[AA vs $C$, $\beta=0.1$]{%
    \includegraphics[width=0.245\textwidth]{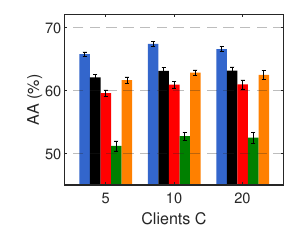}%
    \label{fig:sens:clients}%
}\hspace{0.005\textwidth}
\subfloat[AA vs warmup $s_0$]{%
    \includegraphics[width=0.245\textwidth]{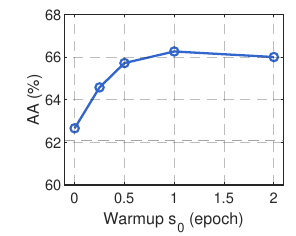}%
    \label{fig:sens:warmup}%
}
\caption{Hyperparameter and robustness sensitivity on LLaVA-1.5-7B. Panels~(a,~b): grouped bar charts across non-IID strength $\beta$ and client count $C$; \method{} dominates every configuration. Panel~(c): temporal warmup length $s_0$; \method{} is robust within a broad range around $s_0=1$ epoch, with the dashed grey line marking the no-warmup ($s_0=0$) performance floor.}
\label{fig:sensitivity}
\end{figure*}

Figure~\ref{fig:sensitivity} examines robustness to three external settings on LLaVA-1.5-7B. Panels~(a,~b) report \method{} against the strongest runner-up across non-IID concentration $\beta \in \{0.1, 0.3, 0.5\}$ and client count $C \in \{5, 10, 20\}$: \method{} holds the top position in every cell, and the \method{}-to-runner-up gap expands monotonically as $\beta$ decreases and $C$ increases; per-expert routing-weighted protection is most valuable precisely when the federation is least aligned. Panel~(c) confirms warmup robustness: AA stays within $\pm 0.4$\,pp for $s_0 \in [1, 2]$ epochs, well above the no-warmup floor (dashed grey). \method{} therefore does not rely on delicate tuning, and its headline results transfer across the realistic range of federated configurations.

\subsection{Cross-Architecture Generalization}
\label{supp:cross_arch}

This subsection supplies the full Qwen2.5-VL-7B~\cite{Bai2025ARXIV} evaluation on CoIN-6 and CoIN-Long-10 at $C=5$ across three non-IID concentrations $\beta \in \{0.1, 0.3, 0.5\}$, the ablation study on the same backbone, and qualitative evidence of forgetting dynamics and per-expert null-space consumption. All entries follow the sixteen-baseline protocol of main Tables~1 and~2; only the backbone is replaced.

\begin{table*}[t]
\centering
\caption{Cross-architecture generalization on Qwen2.5-VL-7B~\cite{Bai2025ARXIV}: main results on CoIN-6 at $C=5$. AA is in \%; BWT is in percentage points. \best{Red} is best and \second{blue} is second across competitive methods within each column. \textcolor{gray}{\textit{Italic gray}} rows are reference bounds.}
\label{tab:supp_qwen_coin6}
\renewcommand{\arraystretch}{1.15}
\setlength{\tabcolsep}{2.5pt}
\footnotesize
\begin{tabular}{l|cc|cc|cc}
\toprule
 & \multicolumn{2}{c|}{$\beta=0.1$} & \multicolumn{2}{c|}{$\beta=0.3$} & \multicolumn{2}{c}{$\beta=0.5$} \\
\cmidrule(lr){2-3} \cmidrule(lr){4-5} \cmidrule(lr){6-7}
Method & AA & BWT & AA & BWT & AA & BWT \\
\midrule
Zero-shot & {\color{gray}\textit{22.10}\std{0.32}} & {\color{gray}\textit{+0.02}\std{0.18}} & {\color{gray}\textit{22.18}\std{0.29}} & {\color{gray}\textit{+0.05}\std{0.22}} & {\color{gray}\textit{22.03}\std{0.30}} & {\color{gray}\textit{+0.08}\std{0.19}} \\
Multi-task & {\color{gray}\textit{69.31}\std{0.34}} & {\color{gray}\textit{-0.05}\std{0.20}} & {\color{gray}\textit{69.54}\std{0.31}} & {\color{gray}\textit{+0.12}\std{0.18}} & {\color{gray}\textit{69.42}\std{0.28}} & {\color{gray}\textit{-0.02}\std{0.21}} \\
\cmidrule(lr){1-7}
FedProx~\cite{Li2020MLSys} & 47.80\std{0.52} & -5.80\std{0.74} & 52.60\std{0.44} & -5.20\std{0.76} & 51.70\std{0.48} & -5.45\std{0.82} \\
Fed-EWC~\cite{Kirkpatrick2017PNAS} & 49.10\std{0.65} & -7.20\std{0.68} & 46.70\std{0.58} & -7.50\std{0.61} & 44.20\std{0.71} & -8.20\std{0.84} \\
Fed-LwF~\cite{Li2016ECCV} & 49.50\std{0.53} & -6.30\std{0.79} & 51.80\std{0.49} & -6.60\std{0.83} & 47.70\std{0.64} & -6.90\std{0.86} \\
Fed-Replay~\cite{Rebuffi2017CVPR} & 50.02\std{0.59} & -7.80\std{0.72} & 52.34\std{0.51} & -8.10\std{0.70} & 51.85\std{0.63} & -8.40\std{0.74} \\
\cmidrule(lr){1-7}
Fed-GPM~\cite{Saha2021ICLR} & 47.45\std{0.70} & -9.45\std{0.88} & 48.90\std{0.67} & -9.80\std{0.92} & 49.55\std{0.61} & -10.15\std{0.79} \\
Fed-O-LoRA~\cite{Wang2023EMNLP} & 43.88\std{0.79} & -10.48\std{0.80} & 45.12\std{0.73} & -10.85\std{0.84} & 45.76\std{0.68} & -11.25\std{0.92} \\
Fed-KeepLoRA~\cite{Luo2026ICLR} & 53.12\std{0.46} & -4.78\std{0.75} & 54.46\std{0.42} & -5.12\std{0.78} & 54.84\std{0.55} & -5.55\std{0.83} \\
Fed-SplitLoRA~\cite{Qiu2026ICLR} & 45.82\std{0.72} & -8.56\std{0.84} & 47.08\std{0.66} & -8.94\std{0.89} & 47.70\std{0.58} & -9.32\std{0.75} \\
\cmidrule(lr){1-7}
Fed-MoELoRA~\cite{Chen2024NIPS} & 49.32\std{0.61} & -7.65\std{0.66} & 50.44\std{0.53} & -8.05\std{0.72} & 50.88\std{0.79} & -8.48\std{0.88} \\
Fed-SMoLoRA~\cite{Wang2025ICCV} & 50.46\std{0.54} & -6.95\std{0.61} & 51.70\std{0.48} & -7.32\std{0.65} & 52.12\std{0.62} & -7.75\std{0.79} \\
Fed-MoDE~\cite{Wei2025NIPS} & 54.80\std{0.55} & -9.60\std{0.76} & 55.82\std{0.50} & -10.06\std{0.81} & 56.04\std{0.47} & -10.52\std{0.85} \\
Fed-PCLR~\cite{Meng2026ICLR} & 46.34\std{0.63} & \second{-3.08}\std{0.71} & 47.55\std{0.57} & \second{-3.32}\std{0.60} & 48.22\std{0.68} & \second{-3.76}\std{0.83} \\
Fed-EProj~\cite{He2026TIP} & 50.20\std{0.52} & -12.10\std{0.88} & 51.43\std{0.45} & -12.47\std{0.89} & 51.90\std{0.50} & -12.95\std{0.93} \\
Fed-Duet~\cite{Guo2026ICLR} & \second{55.48}\std{0.68} & -5.92\std{0.74} & \second{56.64}\std{0.62} & -6.24\std{0.72} & \second{57.16}\std{0.54} & -6.56\std{0.78} \\
\cmidrule(lr){1-7}
\rowcolor{prismblue!12}
\method{} (Ours) & \best{59.80}\std{0.42} & \best{+4.50}\std{0.48} & \best{58.30}\std{0.39} & \best{+2.70}\std{0.45} & \best{58.90}\std{0.45} & \best{+0.30}\std{0.51} \\
\bottomrule
\end{tabular}
\end{table*}

\begin{table*}[t]
\centering
\caption{Cross-architecture generalization on Qwen2.5-VL-7B: main results on CoIN-Long-10 at $C=5$. \best{Red} is best and \second{blue} is second across competitive methods within each column. \textcolor{gray}{\textit{Italic gray}} rows are reference bounds.}
\label{tab:supp_qwen_long10}
\renewcommand{\arraystretch}{1.15}
\setlength{\tabcolsep}{2.5pt}
\footnotesize
\begin{tabular}{l|cc|cc|cc}
\toprule
 & \multicolumn{2}{c|}{$\beta=0.1$} & \multicolumn{2}{c|}{$\beta=0.3$} & \multicolumn{2}{c}{$\beta=0.5$} \\
\cmidrule(lr){2-3} \cmidrule(lr){4-5} \cmidrule(lr){6-7}
Method & AA & BWT & AA & BWT & AA & BWT \\
\midrule
Zero-shot & {\color{gray}\textit{19.35}\std{0.30}} & {\color{gray}\textit{+0.05}\std{0.21}} & {\color{gray}\textit{19.40}\std{0.32}} & {\color{gray}\textit{+0.08}\std{0.19}} & {\color{gray}\textit{19.48}\std{0.28}} & {\color{gray}\textit{+0.02}\std{0.22}} \\
Multi-task & {\color{gray}\textit{49.85}\std{0.35}} & {\color{gray}\textit{-0.15}\std{0.24}} & {\color{gray}\textit{50.12}\std{0.31}} & {\color{gray}\textit{-0.12}\std{0.23}} & {\color{gray}\textit{50.02}\std{0.30}} & {\color{gray}\textit{-0.08}\std{0.20}} \\
\cmidrule(lr){1-7}
FedProx~\cite{Li2020MLSys} & 26.68\std{0.58} & -20.48\std{0.88} & 27.15\std{0.53} & -20.65\std{0.84} & 27.58\std{0.60} & -20.82\std{0.91} \\
Fed-EWC~\cite{Kirkpatrick2017PNAS} & 22.08\std{0.62} & -12.55\std{0.75} & 22.45\std{0.57} & -12.80\std{0.72} & 22.85\std{0.64} & -13.10\std{0.81} \\
Fed-LwF~\cite{Li2016ECCV} & 24.32\std{0.64} & -18.85\std{0.90} & 24.85\std{0.58} & -19.20\std{0.86} & 25.40\std{0.66} & -19.58\std{0.93} \\
Fed-Replay~\cite{Rebuffi2017CVPR} & 32.85\std{0.68} & -12.10\std{0.80} & 33.42\std{0.62} & -12.45\std{0.78} & 33.88\std{0.59} & -12.78\std{0.83} \\
\cmidrule(lr){1-7}
Fed-GPM~\cite{Saha2021ICLR} & 23.08\std{0.75} & -18.28\std{0.94} & 23.60\std{0.70} & -18.65\std{0.92} & 24.05\std{0.65} & -18.95\std{0.85} \\
Fed-O-LoRA~\cite{Wang2023EMNLP} & 17.82\std{0.82} & -18.45\std{0.86} & 18.25\std{0.77} & -18.85\std{0.88} & 18.62\std{0.72} & -19.20\std{0.95} \\
Fed-KeepLoRA~\cite{Luo2026ICLR} & 32.42\std{0.51} & -10.08\std{0.83} & 32.95\std{0.47} & -10.48\std{0.82} & 33.35\std{0.60} & -10.82\std{0.88} \\
Fed-SplitLoRA~\cite{Qiu2026ICLR} & 19.42\std{0.76} & -15.45\std{0.88} & 19.95\std{0.69} & -15.85\std{0.92} & 20.38\std{0.62} & -16.22\std{0.78} \\
\cmidrule(lr){1-7}
Fed-MoELoRA~\cite{Chen2024NIPS} & 24.58\std{0.65} & -14.82\std{0.72} & 25.10\std{0.57} & -15.25\std{0.76} & 25.55\std{0.82} & -15.62\std{0.92} \\
Fed-SMoLoRA~\cite{Wang2025ICCV} & 28.38\std{0.57} & -11.45\std{0.68} & 28.90\std{0.52} & -11.85\std{0.72} & 29.32\std{0.66} & -12.22\std{0.83} \\
Fed-MoDE~\cite{Wei2025NIPS} & 30.98\std{0.58} & -19.35\std{0.82} & 31.55\std{0.54} & -19.78\std{0.86} & 32.02\std{0.51} & -20.18\std{0.90} \\
Fed-PCLR~\cite{Meng2026ICLR} & 19.15\std{0.66} & \second{-7.45}\std{0.75} & 19.65\std{0.60} & \second{-7.85}\std{0.64} & 20.08\std{0.71} & \second{-8.22}\std{0.87} \\
Fed-EProj~\cite{He2026TIP} & 25.42\std{0.55} & -21.42\std{0.92} & 25.92\std{0.48} & -21.85\std{0.94} & 26.35\std{0.53} & -22.22\std{0.97} \\
Fed-Duet~\cite{Guo2026ICLR} & \second{34.62}\std{0.72} & -10.02\std{0.78} & \second{35.20}\std{0.65} & -10.35\std{0.76} & \second{35.68}\std{0.58} & -10.68\std{0.82} \\
\cmidrule(lr){1-7}
\rowcolor{prismblue!12}
\method{} (Ours) & \best{40.55}\std{0.45} & \best{+2.50}\std{0.52} & \best{39.48}\std{0.42} & \best{+1.38}\std{0.48} & \best{40.22}\std{0.48} & \best{+0.45}\std{0.55} \\
\bottomrule
\end{tabular}
\end{table*}

Tables~\ref{tab:supp_qwen_coin6} and~\ref{tab:supp_qwen_long10} reproduce the LLaVA-1.5 ranking pattern of main Tables~1 and~2 on Qwen2.5-VL-7B: \method{} holds the top rank on both AA and BWT across all six $(\beta, \text{benchmark})$ cells, with Fed-Duet and Fed-PCLR remaining the AA and BWT runners-up; \method{} is the only method retaining positive backward transfer across all three $\beta$ settings on CoIN-Long-10. Absolute AA on Qwen is lower than LLaVA-1.5-7B by roughly six to eight percentage points, likely reflecting backbone-specific generation patterns under exact-match scoring rather than any methodological difference; ranking within each paradigm group is preserved on every cell, confirming that C1--C5 transfer beyond the LLaVA family.

\begin{table}[t]
\centering
\caption{Component ablation of \method{} on Qwen2.5-VL-7B at $\beta=0.3$, $C=5$, CoIN-6. Each row removes a single mechanism. Component ordering matches main Table~3 on LLaVA-1.5.}
\label{tab:supp_qwen_ablation}
\renewcommand{\arraystretch}{1.2}
\setlength{\tabcolsep}{4pt}
\scriptsize
\begin{tabular}{l|cc}
\toprule
Variant & AA & BWT \\
\midrule
\rowcolor{prismblue!18}
\method{} (full) & \best{58.30\std{0.39}} & \best{+2.70\std{0.45}} \\
\midrule
\rowcolor{prismblue!5}
w/o PE-FOSU & 53.72\std{0.55} & -1.45\std{0.72} \\
w/o per-expert basis & 54.68\std{0.51} & -0.18\std{0.59} \\
\rowcolor{prismblue!5}
w/o routing-weighted & 55.12\std{0.47} & +0.46\std{0.63} \\
w/o router freeze & 54.25\std{0.58} & -0.08\std{0.66} \\
\rowcolor{prismblue!5}
w/o scheduling & 56.28\std{0.43} & +1.75\std{0.54} \\
\bottomrule
\end{tabular}
\end{table}

Table~\ref{tab:supp_qwen_ablation} reproduces the main-text ablation on Qwen2.5-VL-7B for five of the seven components. The ordering matches LLaVA-1.5 within a few tenths of a percentage point: PE-FOSU removal is sharpest ($-4.58$\,pp), then router freezing ($-4.05$), per-expert basis ($-3.62$), routing-weighted accumulation ($-3.18$), and scheduling ($-2.02$). Forgetting trajectories and per-expert null-space consumption on Qwen (plots omitted for space) preserve the qualitative signatures of main Figs.~3 and~5.

\subsection{Implementation and Per-Layer Budget Schedule}
\label{supp:implementation}

Table~\ref{tab:supp_budget} reports the per-layer protection budget that the water-filling rule $k_l^\star = \bar k\, \gamma_l^2 / \|\boldsymbol{\gamma}\|_2^2$ outputs once the interference landscape $\gamma_l$ is measured. For compact reporting, adjacent MoE-LoRA layers with similar $\gamma_l$ are grouped into three $\gamma$-ranked tiers and the table lists the per-tier average; the deployed budgets are the per-layer water-filling outputs themselves, requiring no tuning beyond the one-off $\gamma_l$ measurement. Per-step projection overhead is below $3\%$ FLOPs across all backbones; the end-of-task PE-FOSU union, executed once per task as a thin SVD on a $d \times (C+1)k$ concatenation, is negligible relative to training.

\begin{table}[t]
\centering
\caption{Per-layer protection budget schedule produced by the $\gamma$-driven water-filling rule. MoE-LoRA layers on each backbone are split into three tiers; $(p, k/r_o)$ follows $\gamma$-rank without additional tuning.}
\label{tab:supp_budget}
\renewcommand{\arraystretch}{1.15}
\setlength{\tabcolsep}{4pt}
\footnotesize
\begin{tabular}{l|c|c|c|c}
\toprule
Backbone & Layers & $\gamma_l$ & $p$ & $k/r_o$ \\
\midrule
\multirow{3}{*}{Qwen2.5-VL-7B} & L20--21 & 0.62 & 0.95 & 0.35 \\
 & L22--25 & 0.52 & 0.90 & 0.25 \\
 & L26--27 & 0.40 & 0.85 & 0.15 \\
\midrule
\multirow{3}{*}{LLaVA-1.5-7B} & L24--25 & 0.60 & 0.95 & 0.35 \\
 & L26--29 & 0.51 & 0.90 & 0.25 \\
 & L30--31 & 0.39 & 0.85 & 0.15 \\
\midrule
\multirow{3}{*}{LLaVA-1.5-13B} & L32--33 & 0.64 & 0.95 & 0.35 \\
 & L34--37 & 0.53 & 0.90 & 0.25 \\
 & L38--39 & 0.41 & 0.85 & 0.15 \\
\bottomrule
\end{tabular}
\end{table}

\subsection{Task-Expert Routing Heatmaps and Routing Stability}
\label{supp:routing_stability}

\begin{figure}[t]
\centering
\subfloat[LLaVA-1.5-7B.]{%
    \includegraphics[width=0.48\linewidth]{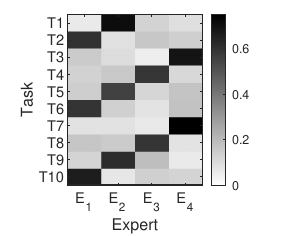}%
    \label{fig:supp:hm7b}%
}\hfill
\subfloat[LLaVA-1.5-13B.]{%
    \includegraphics[width=0.48\linewidth]{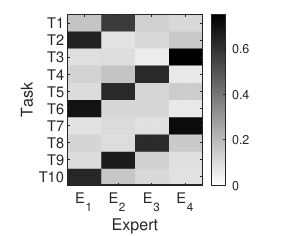}%
    \label{fig:supp:hm13b}%
}
\caption{Task-expert routing weight heatmap on CoIN-Long-10 after the router freeze at task 1. Each task preferentially activates one or two experts, and the assignment remains stable across the whole sequence; this visually complements the null-space consumption of main Fig.~5.}
\label{fig:supp_heatmap}
\end{figure}

Figure~\ref{fig:supp_heatmap} reports the task-expert routing weight heatmap on CoIN-Long-10 after the router freeze at task 1. Because \method{} freezes the router after $T_1$, top-1 flip rate and weight drift are zero by construction; routing entropy remains near zero throughout the ten-task sequence. Baselines with trainable routers (Fed-MoDE, Fed-SMoLoRA, Fed-MoELoRA) instead show continuous drift in all three metrics, consistent with the view that routing instability is a design-level property of trainable routers rather than a transient training artifact.


\end{document}